\documentclass[journal,10pt]{IEEEtran}

\usepackage{hyperref}
\usepackage{cite}
\usepackage{graphicx}
\usepackage{array}
\usepackage{mdwmath}
\usepackage{amssymb}
\usepackage{mdwtab}
\usepackage{stfloats}
\usepackage[tight,footnotesize]{subfigure}
\usepackage{amsmath,amsthm,amsfonts}
\usepackage{threeparttable}
\usepackage{color}
\usepackage{url}
\usepackage{algorithm}
\usepackage{algorithmic}
\usepackage{multicol}
\usepackage{epstopdf}
\usepackage{flushend}
\usepackage{epsfig}
\usepackage{xcolor}
\usepackage{textcomp}
\usepackage{bm}
\usepackage{setspace}
\usepackage{float}
\usepackage{bm}
\usepackage{microtype}
\usepackage{booktabs}
\usepackage{array}
\theoremstyle{definition}
\newtheorem{proposition}{Proposition}

\newtheorem{remark}{Remark}
\newtheorem{lemma}{Lemma}

\allowdisplaybreaks[4]

\ifCLASSINFOpdf

\else

\fi

\begin{document}
\title{\fontsize{22pt}{28pt}\selectfont Ultra-Low-Cost Hybrid Beamforming: A New Static-\\Connection Architecture with Sparse Phase-Shifter Sharing}

\author{

	Honghao Wang, Qingqing Wu, Yifei Wu, Yuxuan Chen, Wen Chen, and Derrick Wing Kwan Ng

\vspace{-20pt}

\thanks{H. Wang, Q. Wu, and W. Chen are with the School of Information Science and Electronic Engineering, Shanghai Jiao Tong University, Shanghai 200240, China (e-mail: hhwang@sjtu.edu.cn; qingqingwu@sjtu.edu.cn; wenchen@sjtu.edu.cn). Y. Wu is with the Institute for Digital Communications, Friedrich-Alexander-Universität Erlangen-Nürnberg (FAU), Germany (e-mail: yifei.wu@fau.de). Y. Chen is with the Huawei Device Company Ltd., Shenzhen 518129, China (e-mail: chenyuxuan56@huawei.com). D. W. K. Ng is with the School of Electrical Engineering and Telecommunications, University of New South Wales, Sydney, NSW 2052, Australia (e-mail: w.k.ng@unsw.edu.au). (\textit{Corresponding author: Qingqing Wu}).}
}

\maketitle

\begin{abstract}
Hybrid beamforming is a promising solution for high-frequency multi-antenna wireless systems, but its practical implementation remains significantly constrained by the hardware cost and complexity of analog phase-shifter (PS) networks. Although sub-connected architectures significantly simplify the analog network, their conventional realization still requires a dedicated PS for each antenna, which leads to considerable layout area, wiring complexity, calibration efforts, and control overhead. To effectively address this issue, this paper proposes a novel static-connection architecture with sparse PSs for ultra-low-cost sub-connected hybrid beamforming, where multiple antennas within each sub-array can share a common PS through an optimized yet fixed PS-to-antenna connection matrix. The proposed architecture preserves static hardware connections while enabling dynamic beam control via adaptive PS phase-shift adjustments and digital precoding. For the single-radio-frequency (RF)-chain scenario, the sparse-PS connection design is transformed into an antenna-grouping problem, for which structural properties are characterized analytically and an efficient algorithm is developed. For the multi-RF-chain scenario, we develop a competent quality-of-service (QoS)-majorization-minimization (MM) algorithm to handle the mixed discrete-continuous joint optimization problem. Numerical results demonstrate that the proposed architecture substantially reduces the PS count while preserving most of the beamforming capability provided by the traditional full-PS sub-connected architecture. In particular, the proposed design achieves PS-count reductions of $37.5\%$ and $62.5\%$ in single-RF-chain and multi-RF-chain systems, respectively, while effectively avoiding deep-null and grating-lobe degradations associated with rigid deterministic connection schemes. These results provide engineering insights into static sparse-PS sharing: the key to hardware-efficient hybrid beamforming is not merely reducing the PS count, but also preserving the essential analog-domain degrees of freedom through optimized PS connection topologies.
\end{abstract}

\vspace{-5pt}
\section{Introduction}
Future high-frequency wireless systems rely on directional transmission provided by large-scale antenna to compensate for severe propagation loss and to achieve communication services \cite{Elbir_BF_suv,WanSJ_HB_GNN,WangHH_low_altitude_Mag}. In this context, fully digital precoding becomes impractical since connecting each antenna directly to a dedicated radio frequency (RF)-chain, incurs excessive hardware costs, power consumption, and implementation burden \cite{Molisch_HB_suv,Albreem_BF_suv,WangHH_delay_MA}. As a remedy, hybrid beamforming has been proposed to address this challenge by exploiting fewer RF-chains combined with a high-dimensional analog phase-shifter (PS) network, enabling most spatial processing to be performed in the analog domain, while a low-dimensional digital precoder handles data multiplexing and interference management \cite{Sohrabi_HB_suv,YuXH_HB_switch_1}. In general, hybrid beamforming architectures can be classified into fully-connected and sub-connected structures \cite{GaoXY_HB_EE}. Specifically, each RF-chain in the former one is connected to all antennas through a dense network of PSs, splitters, couplers and crossings, which maximizes analog-domain degrees of freedom (DoFs) at the expense of extremely high implementation complexity, layout area, and insertion loss \cite{Ahmed_HB_suv,YuXH_HB_switch_2}. These inherent drawbacks become more pronounced at higher carrier frequencies and with increased hardware integration density. As a result, considerable attention has shifted to the sub-connected architecture, in which each RF-chain exclusively serves its own antenna sub-array. Compared to the fully-connected architecture, the sub-connected one features a simpler analog network and superior implementation scalability, thus becoming a promising solution for practical hybrid beamforming \cite{ZhangZL_HB_sub}.\looseness=-1

Nevertheless, the conventional sub-connected architecture still requires a dedicated PS for each antenna. In highly integrated high-frequency arrays, where the short wavelength pushes the array aperture toward increasingly compact implementations, this poses a critical hardware bottleneck: the PS count directly affects layout area, wiring density, package complexity, calibration overhead, and thermal management \cite{Bogale_HB_RF_PS}. Consequently, the question of how to substantially reduce the PS count without causing intolerable performance degradation, thereby lowering hardware cost and complexity, has naturally emerged as a fundamental obstacle in the practical commercialization of high-frequency array systems. Notably, a growing body of work has attempted to alleviate the reliance on high-precision PSs by incorporating switch networks \cite{Payami_HB_PS,Payami_HB_PS_switch}. Representative examples include fully-adaptive-connected structures and switch/dynamic-PS mixtures \cite{ZhuXD_HB_adaptive}, switch-aided designs leveraging low-resolution PSs \cite{LiHY_HB_low_resolution_PS_dynamic_array,GaoH_HB_low_reso_DAC}, switch-assisted architectures based on static PSs \cite{YuXH_HB_switch_2,Alkhateeb_HB_fixed_PS_switch,Bogale_HB_fixed_PS_switch}, and switch-only analog connections \cite{Ardah_HB_PS_switch,Méndez_HB_PS_switch}. In parallel, another line of research has presented the concept of dynamic sub-arrays, where the mapping between RF-chains and antennas is adaptively reconfigured according to varying service targets \cite{JinJN_HB_dynamic_array,Park_HB_dynamic_array}. However, the former group of schemes relies on dynamic switch networks, transforming analog-domain design into a highly complex combinatorial optimization problem \cite{Ratnam_HB_sele}. Thus, instead of re-optimizing the analog connection pattern for each target-dependent channel, it can be more practical to optimize a sparse-PS static-connection pattern offline. Moreover, at high-frequency carriers, dynamic signal rerouting further exacerbates issues such as parasitic coupling and active reflections. Meanwhile, the latter group of solutions requires complicated switch networks, additional control logic, and increased calibration overhead, while the non-static wiring also considerably complicates RF impedance matching \cite{HuYB_HB_sub}. Consequently, the aforementioned approaches essentially replace the original PS-related costs with alternative, and often more severe, hardware constraints and complexities, thereby compromising the inherent hardware simplicity that originally made the sub-connected architecture attractive. This directly contradicts the primary objective of reducing the PS count to develop a genuinely ultra-low-cost and highly implementable hybrid beamforming architecture.

Motivated by the above objective, engineering implementations often resort to deterministic PS-sharing strategies, where multiple antennas in the same sub-array are connected to one shared PS according to a predetermined pattern, thereby reducing the PS count and minimizing the occupied layout area \cite{Rupakula_HB_PS_num,Valle_HB_PS_num}. Such schemes feature hardware efficiency since they preserve a static-connection structure and eliminate the need for real-time reconfiguration. However, deterministic sharing strategies are inherently rigid: they do not exploit the spatial characteristics of user channels, and their resulting beampatterns may contain undesired deep nulls or strong grating lobes, which can severely degrade array gain in certain target directions \cite{Balanis_antenna}. Therefore, in these practical situations, the core technical challenge is no longer simply reducing the number of PSs, but rather optimizing the PS-sharing pattern itself, so that hardware is simplified without excessively sacrificing beam management capability and communication performance.\looseness=-1

To address this practical requirement, this paper investigates a novel static-connection architecture with sparse PSs for sub-connected hybrid beamforming. The key idea is that, within each sub-array, multiple antennas are allowed to share PSs, and the PS-to-antenna connection pattern is optimized offline according to the spatial characteristics of target user channels and then kept fixed, while the PS phase shifts and digital precoder are adjusted according to the instantaneous transmission task. Compared to conventional fully-connected and sub-connected PS networks, the proposed architecture substantially reduces the PS count and further simplifies the associated hardware implementation. Unlike architectures employing switches or dynamic sub-arrays, e.g., \cite{Payami_HB_PS,Payami_HB_PS_switch,ZhuXD_HB_adaptive,LiHY_HB_low_resolution_PS_dynamic_array,GaoH_HB_low_reso_DAC,Alkhateeb_HB_fixed_PS_switch,Bogale_HB_fixed_PS_switch,Ardah_HB_PS_switch,Méndez_HB_PS_switch,JinJN_HB_dynamic_array,Park_HB_dynamic_array}, it avoids real-time reconfiguration of connections, along with the associated switching, control, and routing overhead. Furthermore, compared to deterministic PS-sharing schemes, this specifically designed sparse-PS static-connection pattern preserves superior beam control capability and enhanced communication performance without introducing extra hardware cost and complexity. From an engineering implementation perspective, the proposed architecture is particularly appealing when hardware cost, array integration, board-level area, control-line fan-out, wiring complexity, analog-front-end stability, and control overhead are the primary design considerations \cite{Bogale_HB_RF_PS}. Indeed, it transforms the PS connection pattern from merely being a tool for hardware simplification into a new architectural DoF, shifting the design emphasis from instantaneous analog-domain reconfiguration to strategic offline hardware configuration. Such a static and preconfigured design philosophy is highly compatible with practical manufacturing requirements and constraints.

In scenarios such as fixed wireless access \cite{Aldubaikhy_fixed_access}, road or railway coverage \cite{LiJ_train}, industrial Internet-of-Things deployments \cite{Jabbar_industry}, and other directional service environments \cite{TanJR_V2X}, user directions typically cluster within a narrow angular region and thus can be efficiently represented by a finite set of candidate service directions \cite{Giordani_beam_3GPP}. The proposed architecture is particularly well-suited to these settings, since it leverages the spatial channel characteristics of all potential directions to optimize the sparse-PS static-connection pattern offline, thereby significantly reducing hardware cost and implementation complexity. Meanwhile, by optimizing the high-dimensional PS connection network offline instead of frequently reconfiguring it according to instantaneous traffic, the architecture greatly reduces computational complexity and control overhead. This design approach also aligns well with practical beam management systems, where finite beam/codebook structures \cite{Alkhateeb_codebook}, hotspot distributions \cite{Qurratulain_Beam}, or sectorized service regions \cite{XueQ_Beam}, facilitate efficient beam control in realistic deployment scenarios \cite{Alrabeiah_Beam}. The main contributions are summarized as follows:
\vspace{-1pt}
\begin{itemize}
\item This paper proposes an ultra-low-cost hybrid beamforming framework enabled by a static-connection architecture featuring a reduced PS count. Specifically, the PS connection matrix is optimized offline and kept fixed, allowing multiple antennas within each sub-array to effectively share PS resources and thereby integrating static hardware connections with dynamic beam control.
\item For the single-RF-chain scenario, we convert the sparse-PS connection design into an antenna-grouping problem and characterize its structural properties. Moreover, we further derive analytical results and unveil practical engineering insights for the single-beam case, and develop an efficient connection optimization algorithm applicable to both single-beam and beam-switching cases.
\item For the multi-RF-chain scenario, we develop an efficient quality-of-service (QoS)-majorization-minimization (MM) algorithm for the resulting mixed discrete-continuous joint optimization problem. It alternately updates the digital precoder by a closed-form uplink--downlink-duality-based solution, the PS phase shifts by a closed-form unit-modulus MM solution, and the binary PS connection and assignment matrices by solving a mixed-integer linear program (MILP) built from linear MM surrogates.
\item Numerical results demonstrate that the proposed architecture can substantially reduce PS-related hardware cost while preserving nearly full beamforming capability of the conventional sub-connected counterpart. Specifically, the proposed design achieves PS-count reductions of $37.5\%$ and $62.5\%$ for single-RF-chain and multi-RF-chain systems, respectively, and effectively avoids the undesirable deep-null and grating-lobe issues arising from rigid deterministic connection patterns.
\end{itemize}

The remainder of this paper is organized as follows. Section \ref{sec2} presents the system model and formulates the design as a mixed discrete-continuous optimization problem. Section \ref{sec3} investigates the single-RF-chain scenario, developing the analytical results and an efficient algorithm. Section \ref{sec4} extends the design to the multi-RF-chain scenario and proposes an effective QoS-MM algorithm. Section \ref{sec5} provides numerical results, and Section \ref{sec6} concludes the paper.\looseness=-1

\emph{Notations:} Scalars, vectors, matrices, and sets are denoted by italic, bold lowercase, bold uppercase, and calligraphic letters, respectively. For a set $\mathcal{L}$, $|\mathcal{L}|$ is its cardinality, and $\mathcal{L}\setminus\{l\}$ denotes the set obtained by excluding $l$ from $\mathcal{L}$. Here, $\mathbb{C}^{M \times N}$, $\{0,1\}^{M \times N}$, $\mathbb{N}$, and $\mathbb{R}^{+}$ denote the spaces of $M \times N$ complex and binary matrices, and the sets of natural and positive real numbers, respectively. For a real scalar $x$, $\lfloor x\rfloor$ is the largest integer not exceeding $x$. For a complex scalar $x$, $|x|$, $\Re\{x\}$, and $\Im\{x\}$ denote its modulus, real part, and imaginary part, respectively. For a vector $\mathbf{x}$, $\mathbf{x}^*$, $\mathbf{x}^T$, $\mathbf{x}^H$, $\|\mathbf{x}\|_p$, $\operatorname{diag}(\mathbf{x})$, $\arg(\mathbf{x})$, $\mathbf{x}[i]$, and $\mathbf{x}[M\!\!\!:\!\!\!N]$ denote its conjugate, transpose, conjugate transpose, $p$-norm, diagonalization matrix, element-wise phases, $i$-th entry, and the subvector of its $M$-th through $N$-th entries, respectively. For a matrix $\mathbf{A}$, $\mathbf{A}[i,j]$, $\mathbf{A}[i\,,:\,]$, $\mathbf{A}[\,:\,,j]$, $\operatorname{tr}(\mathbf{A})$, $\mathbf{A}^{-1}$, $\operatorname{vec}(\mathbf{A})$, and $\lambda_{\max}(\mathbf{A})$ denote its $(i,j)$-th entry, $i$-th row, $j$-th column, trace, inverse, vectorization, and maximum eigenvalue, respectively, while $\mathbf{A}\succeq\mathbf{0}$ and $\mathbf{A}\succ\mathbf{0}$ indicate that $\mathbf{A}$ is positive semidefinite and positive definite, respectively. $\mathbf{I}^N$ and $\mathbf{0}^{M \times N}$ denote the $N\times N$ identity matrix and $M\times N$ zero matrix, respectively. $\mathcal{CN}(\mu,\sigma)$ denotes the circularly symmetric complex Gaussian distribution with mean $\mu$ and covariance $\sigma$.\looseness=-1

\begin{figure}[t]
	\centering
	\includegraphics[width=2.82in]{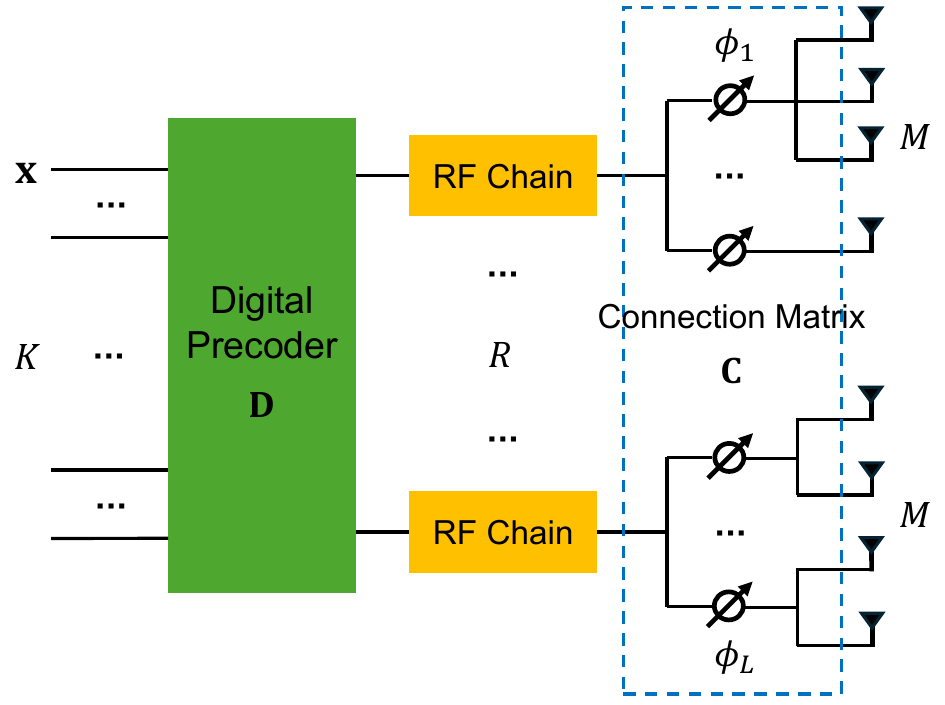}
	\vspace{-10pt}
	\caption{Illustration of the new static-connection architecture with sparse PSs for sub-connected hybrid beamforming.}
	\label{Fig1}
	\vspace{-15pt}
\end{figure}

\vspace{-5pt}
\section{System Model}\label{sec2}
Consider a multiuser multiple-input single-output (MU-MISO) downlink communication system, where a base station (BS) employs a uniform linear array (ULA) with $N$ half-wavelength-spaced antennas and $R$ radio-frequency (RF) chains, serving $K$ candidate service directions (or, equivalently, $K$ candidate single-antenna users), indexed by $\mathcal{K}\triangleq\{1,\ldots,K\}$. For notational convenience, these candidate directions are still referred to as ``users'' in the sequel. We adopt a sub-connected hybrid beamforming architecture, where each RF-chain drives a dedicated sub-array consisting of $M$ adjacent antennas, thereby satisfying $N=RM$. To reduce hardware complexity, multiple antennas within the same sub-array share PSs, as illustrated in Fig. \ref{Fig1}. Let $L$ denote the total number of available PSs, where $R\le L\le N$. Define the antenna index set as $\mathcal{N}\triangleq\{1,\ldots,N\}$, the RF-chain/sub-array index set as $\mathcal{R}\triangleq\{1,\ldots,R\}$, and the PS index set as $\mathcal{L}\triangleq\{1,\ldots,L\}$. The received signal at user $k$ can be expressed as\looseness=-1
\begin{equation}\label{eq_signal_k}
	y_k=\mathbf{h}_k^H\mathbf{w}_kx_k+\sum\nolimits_{i\in\mathcal{K},i\ne k}\mathbf{h}_k^H\mathbf{w}_ix_i+z_k,~k\in\mathcal{K},
\end{equation}
where $\mathbf{h}_k,\mathbf{w}_k\in\mathbb{C}^{N\times1}$, and $x_k,z_k\in\mathbb{C}$ are the channel vector, transmit beamforming vector, information symbol with normalized power, and additive white Gaussian noise with $z_k\sim\mathcal{CN}(0,\sigma_k^2)$ of user $k$, respectively. In this paper, we focus on the line-of-sight (LoS) channel model following the form
\vspace{0pt}
\begin{equation}\label{eq_channel}
	\mathbf{h}_k=\sqrt{\alpha_k}\mathbf{a}(\theta_k),~k\in\mathcal{K},
\end{equation}
where $\alpha_k\in\mathbb{R}^+$ is the channel power gain of user $k$ and $\mathbf{a}(\theta_k)\triangleq\big[1,e^{-j\psi\sin(\theta_k)},\ldots,e^{-j(N-1)\psi\sin(\theta_k)}\big]^T\in\mathbb{C}^{N\times 1}$ is the ULA steering vector associated with user $k$'s angle-of-departure (AoD) $\theta_k$. $\psi=2\pi{d}/\lambda\overset{d=\lambda/2}{=}\pi$ is a constant value, with $\lambda$ representing the carrier wavelength and $d=\lambda/2$ being the half-wavelength antenna spacing.

We adopt a standard hybrid analog-digital beamforming structure, expressed as $\mathbf{w}_k=\mathbf{F}\mathbf{d}_k$, where $\mathbf{F}\in\mathbb{C}^{N\times{R}}$ is the analog beamforming matrix implemented by PSs and its associated connection pattern, and $\mathbf{d}_k\in\mathbb{C}^{R\times1}$ is the digital precoder of user $k$ at baseband. Due to the nature of sub-connected architecture, $\mathbf{F}$ is block diagonal:
\begin{equation}
	\mathbf{F}=\left[\begin{matrix}
		\mathbf{f}_1 & \ldots & \mathbf{0}\\[-5pt]
		\vdots & \ddots & \vdots\\[-3pt]
		\mathbf{0} & \ldots & \mathbf{f}_R
	\end{matrix}\right],
\end{equation}
where $\mathbf{f}_r\in\mathbb{C}^{M\times1}$ is the analog beamformer for the $r$-th sub-array. In practice, assigning one dedicated PS per antenna may be unnecessary and hardware-expensive. In fact, when users are concentrated within a limited set of direction angles, a fully flexible hybrid beamforming becomes redundant, and reducing the PS count may cause only a minor performance loss while substantially lowering implementation costs. Moreover, in highly integrated high-frequency arrays, various implementation constraints, such as limited layout area, wiring resources, and packaging space, often make per-antenna PS deployment difficult, if not impossible. Therefore, it is of practical interest to develop a sparse-PS architecture that achieves a more favorable trade-off between communication performance and hardware efficiency. To substantially reduce the PS count, we propose connecting multiple antennas in the same sub-array to a shared PS, while keeping the connection pattern fixed. Specifically, the analog beamforming matrix can be modeled as\footnote{To focus on the beamforming-level design of sparse-PS static connections, the present model assumes ideal lossless PS sharing and equal-amplitude distribution among antennas within the same sub-array.\vspace{-3.4pt}}\looseness=-1
\begin{equation}\label{eq_F}
	\mathbf{F}=\frac{1}{\sqrt{M}}\mathbf{C}\mathbf{\Phi}=\frac{1}{\sqrt{M}}\left[\begin{matrix}
		\mathbf{C}_1 & \ldots & \mathbf{0}\\[-5pt]
		\vdots & \ddots & \vdots\\[-3pt]
		\mathbf{0} & \ldots & \mathbf{C}_R
	\end{matrix}\right]\hspace{-3pt}\left[\begin{matrix}
	\bm{\phi} & \ldots & \mathbf{0}\\[-5pt]
	\vdots & \ddots & \vdots\\[-3pt]
	\mathbf{0} & \ldots & \bm{\phi}
	\end{matrix}\right],
\end{equation}
where $\mathbf{C}\in\{0,1\}^{N\times{RL}}$ is a block-diagonal binary connection matrix describing PS-to-antenna wiring, and $\mathbf{\Phi}\in\mathbb{C}^{RL\times{R}}$ is the block-diagonal phase-shift matrix with unit-modulus entries. Let $\mathbf{C}_r\in\{0,1\}^{M\times L}$ denote the $r$-th diagonal submatrix in $\mathbf{C}$, where $\mathbf{C}_r[m,l]=1$ indicates that the $l$-th PS is connected to the $m$-th antenna in the $r$-th sub-array. $\bm{\phi}\triangleq[\phi_1,\ldots,\phi_L]^T\in\mathbb{C}^{L\times1}$ collects the PS phase shifts, where $\phi_l$ is the phase shift of the $l$-th PS with $|\phi_l|=1$. Then, we have\looseness=-1
\begin{equation}
	\mathbf{f}_r=\frac{1}{\sqrt{M}}\mathbf{C}_r\bm{\phi},~r\in\mathcal{R}.
\end{equation}

To explicitly model the fact that each PS can be connected to antennas within only one sub-array, we introduce an auxiliary binary PS-to-RF assignment matrix $\mathbf{S}\in\{0,1\}^{R\times L}$, where $\mathbf{S}[r,l]=1$ indicates that the $l$-th PS is assigned to the $r$-th RF-chain, or equivalently, to the $r$-th sub-array. The connection constraints impose the following conditions: i) each antenna must be connected to exactly one PS; ii) each PS must be assigned to exactly one RF-chain/sub-array; iii) an antenna can only be connected to a PS assigned to its own RF-chain/sub-array; and iv) each PS must be connected to at least one antenna in its assigned sub-array, which are formulated as
\begin{align}
	&\sum\nolimits_{l\in\mathcal{L}}\mathbf{C}_r[m,l]=1,~m\in\mathcal{M}\triangleq\{1,\ldots,M\},~r\in\mathcal{R},\label{eq_PS_allocation_1}\\
	&\sum\nolimits_{r\in\mathcal{R}}\mathbf{S}[r,l]=1,~l\in\mathcal{L},\label{eq_PS_allocation_2}\\
	&\mathbf{C}_r[m,l]\le \mathbf{S}[r,l],~m\in\mathcal{M},~l\in\mathcal{L},~r\in\mathcal{R},\label{eq_PS_allocation_3}\\
	&\mathbf{S}[r,l]\le \sum\nolimits_{m\in\mathcal{M}}\mathbf{C}_r[m,l],~l\in\mathcal{L},~r\in\mathcal{R}.\label{eq_PS_allocation_4}
\end{align}
Specifically, \eqref{eq_PS_allocation_1} and \eqref{eq_PS_allocation_2} guarantee conditions i) and ii), respectively. Constraint \eqref{eq_PS_allocation_3} couples PS-to-antenna connection with PS-to-RF assignment, ensuring condition iii). Constraint \eqref{eq_PS_allocation_4} rules out idle PS assignments by holding condition iv). Together, \eqref{eq_PS_allocation_2}--\eqref{eq_PS_allocation_4} make $\mathbf{S}[r,l]$ an exact indicator of whether the $l$-th PS is used by the $r$-th sub-array.


With each RF-chain ultimately connected to $M$ antennas in its corresponding sub-array, the factor $1/\sqrt{M}$ is chosen as the power normalization factor that ensures $\|\mathbf{f}_r\|_2=1$. Since the sub-arrays are disjoint, we have $\mathbf{F}^H\mathbf{F}=\mathbf{I}^R$, and thus
\begin{equation}
	\sum\nolimits_{k\in\mathcal{K}}\|\mathbf{w}_k\|_2^2=\sum\nolimits_{k\in\mathcal{K}}\|\mathbf{F}\mathbf{d}_k\|_2^2=\sum\nolimits_{k\in\mathcal{K}}\|\mathbf{d}_k\|_2^2.
\end{equation}

With $\mathbf{w}_k=\mathbf{F}\mathbf{d}_k=\mathbf{C}\mathbf{\Phi}\mathbf{d}_k/\sqrt{M}$, the signal-to-interference-plus-noise ratio (SINR) at user $k$ can be expressed as
\begin{equation}
	\mathrm{SINR}_k=\frac{|\mathbf{h}_k^H\mathbf{C}\mathbf{\Phi}\mathbf{d}_k|^2}{\sum_{i\in\mathcal{K},i\ne{k}}|\mathbf{h}_k^H\mathbf{C}\mathbf{\Phi}\mathbf{d}_i|^2+M\sigma_k^2},~k\in\mathcal{K}.
\end{equation}

We aim to minimize the transmit power while satisfying the QoS requirements by jointly optimizing $\left\{\{\mathbf{d}_k\}_{k\in\mathcal{K}},\mathbf{C},\mathbf{S},\bm{\phi}\right\}$. Accordingly, the optimization problem is formulated as
\begin{subequations}
	\begin{alignat}{2}
		\mathrm{(P1)}:~&\underset{\{\mathbf{d}_k\}_{k\in\mathcal{K}},\mathbf{C},\mathbf{S},\bm{\phi}}{\min}~&&\sum\nolimits_{k\in\mathcal{K}}\|\mathbf{d}_k\|_2^2\label{eq_P1_a}\\
		&\hspace{-24pt}\mathrm{s.t.}&&\hspace{-47pt}\mathrm{SINR}_k\ge\gamma_k,~k\in\mathcal{K},\label{eq_P1_b}\\
		&&&\hspace{-47pt}\mathbf{C}_r[m,l]\in\{0,1\},~m\in\mathcal{M},~l\in\mathcal{L},~r\in\mathcal{R},\label{eq_P1_c}\\
		&&&\hspace{-47pt}\mathbf{S}[r,l]\in\{0,1\},~r\in\mathcal{R},~l\in\mathcal{L},\label{eq_P1_d}\\
		&&&\hspace{-47pt}|\phi_l|=1,~l\in\mathcal{L},\label{eq_P1_e}\\
		&&&\hspace{-47pt}\eqref{eq_PS_allocation_1}-\eqref{eq_PS_allocation_4}.\nonumber
	\end{alignat}
\end{subequations}
The resulting optimization problem is a challenging mixed-integer nonlinear program (MINLP), due to the binary variables in $\{\mathbf{C},\mathbf{S}\}$ and the intractable coupling between discrete and continuous variables. To gain analytical insights and derive efficient algorithms, we next investigate two specific scenarios: i) the single-RF-chain scenario ($R=1$), and ii) the general multi-RF-chain scenario ($R>1$).

\begin{remark}
	The parameter $K$ can be interpreted as the cardinality of the set of candidate service users/directions, rather than the number of users that must be active simultaneously in every transmission interval. For the purpose of reducing hardware cost and implementation complexity, the sparse-PS connection proposed in this paper is kept static, i.e., the PS-to-antenna connection pattern is determined offline and is not reconfigured according to the instantaneous active-user set. Consequently, the PS network should be provisioned against the worst-case communication condition at the design stage, which motivates problem (P1) by assuming that all candidate users in $\mathcal{K}$ are simultaneously active. The resulting design naturally accommodates practical scenarios in which only a subset $\mathcal{S}\subseteq\mathcal{K}$ of users is active. Specifically, let $\{\mathbf{w}_k^\star\}_{k\in\mathcal{K}}$ be a feasible beamforming solution to (P1). If only users in $\mathcal{S}$ are active, we can simply set $\mathbf{w}_i=\mathbf{0},~\forall i\notin\mathcal{S}$, while keeping the remaining variables unchanged. Then, for any $k\in\mathcal{S}$,
	\begin{align}
		\mathrm{SINR}_k^{(\mathcal{S})}&=\frac{|\mathbf{h}_k^H \mathbf{w}_k^\star|^2}{\sum_{i\in\mathcal{S},i\neq k}|\mathbf{h}_k^H \mathbf{w}_i^\star|^2+\sigma_k^2}\nonumber\\[-3pt]
		&\ge\frac{|\mathbf{h}_k^H \mathbf{w}_k^\star|^2}{\sum_{i\in\mathcal{K},i\neq k}|\mathbf{h}_k^H\mathbf{w}_i^\star|^2+\sigma_k^2}=\mathrm{SINR}_k^{(\mathcal{K})}.
	\end{align}
	Hence, any beamforming design that is feasible for this worst-case fully-active problem is automatically feasible for any active subset. Moreover, since interference is reduced when inactive users are excluded, the minimum transmit power for a subset-active case is no larger, and is generally smaller, than that required by the worst-case fully-active problem (P1).
\end{remark}

\vspace{-8pt}
\section{Single RF-Chain}\label{sec3}
In this section, we examine the special case of a single-RF-chain configuration ($R=1$). In this case, hybrid beamforming simplifies to analog beamforming facilitated by a sparse-PS connection network, allowing for deeper analytical insights into the problem structure. Since a single RF-chain can serve at most one active user at a time, the original worst-case design problem naturally becomes a beam-switching design: we seek a single PS-to-antenna connection matrix that supports different users/directions via reconfiguring the PS phase shifts. Accordingly, the transmit beamformer for user $k$ is written as%
\begin{equation}
	\mathbf{w}_k=\sqrt{\bar{P}_k}\mathbf{\bar{f}}_k,~k\in\mathcal{K},
\end{equation}
where $\bar{P}_k$ and $\mathbf{\bar{f}}_k$ are the allocated transmit power and unit-norm analog beamformer for user $k$, respectively. Given the sparse-PS connection model, we have
\begin{equation}
	\mathbf{\bar{f}}_k=\frac{1}{\sqrt{N}}\mathbf{\bar{C}}\bm{\bar{\phi}}_k,~k\in\mathcal{K},
\end{equation}
where $\mathbf{\bar{C}}\in\{0,1\}^{N\times{L}}$ represents the single-RF-chain PS connection matrix, $\bm{\bar{\phi}}_k\triangleq\big[\bar{\phi}_{k,1},\ldots,\bar{\phi}_{k,L}\big]^T\in\mathbb{C}^{L\times1}$, and $\bar{\phi}_{k,l}$ is the phase shift of the $l$-th PS for user $k$ with $\big|\bar{\phi}_{k,l}\big|=1$. Furthermore, the connection constraints \eqref{eq_PS_allocation_1}--\eqref{eq_PS_allocation_4} are reduced to
\begin{align}
	&\sum\nolimits_{l\in\mathcal{L}}\mathbf{\bar{C}}[n,l]=1,~n\in\mathcal{N},\label{eq_P2_e}\\
	&\sum\nolimits_{n\in\mathcal{N}}\mathbf{\bar{C}}[n,l]\ge1,~l\in\mathcal{L}.\label{eq_P2_f}
\end{align}
This configuration enables beam switching across users in a time-division manner. In particular, $\bar{\mathbf{C}}$ remains fixed as static wiring, while $\big\{\bm{\bar{\phi}}_k\big\}_{k\in\mathcal{K}}$ are adjusted to steer different beams toward different users. With the above structure, the received signal when serving user $k$ is given by
\begin{align}
	\bar{y}_k=\sqrt{\frac{\bar{P}_k}{N}}\mathbf{h}_k^H\mathbf{\bar{C}}\bm{\bar{\phi}}_kx_k+z_k,~k\in\mathcal{K},
\end{align}
yielding a signal-to-noise ratio (SNR) of the form
\begin{equation}
	\mathrm{SNR}_k=\frac{\bar{P}_k\big|\mathbf{h}_k^H\mathbf{\bar{C}}\bm{\bar{\phi}}_k\big|^2}{N\sigma_k^2},~k\in\mathcal{K}.
\end{equation}
Therefore, the single-RF-chain beam-switching design can be explicitly reformulated as the following optimization problem:\looseness=-1
\begin{subequations}
	\begin{alignat}{2}
		\mathrm{(P2)}:~&\underset{\mathbf{\bar{C}},\left\{\bm{\bar{\phi}}_k,\bar{P}_k\right\}_{k\in\mathcal{K}}}{\min}~&&\sum\nolimits_{k\in\mathcal{K}}\bar{P}_k\label{eq_P2_a}\\[-4pt]
		&\quad\quad\mathrm{s.t.}&&\hspace{-1pt}\mathrm{SNR}_k\ge\gamma_k,~k\in\mathcal{K},\label{eq_P2_b}\\
		&&&\hspace{-1pt}\mathbf{\bar{C}}[n,l]\in\{0,1\},~n\in\mathcal{N},~l\in\mathcal{L},\label{eq_P2_c}\\
		&&&\hspace{-1pt}\big|\bar{\phi}_{k,l}\big|=1,~k\in\mathcal{K},~l\in\mathcal{L},\label{eq_P2_d}\\
		&&&\hspace{-1pt}\eqref{eq_P2_e},~\eqref{eq_P2_f}.\nonumber
	\end{alignat}
\end{subequations}
This formula emphasizes the key design objective for single-RF-chain optimization: selecting a single connection matrix that effectively preserves array gain over a set of candidate users, so that per-user phase-shift reconfiguration can generate the required directional beams. Next, we start with a single-beam scenario to extract analytical insights, and extend to devise an efficient algorithm for the beam-switching scenario.\looseness=-1

\subsection{Single Beam with Dedicated PS Connection}\label{subsec3_1}
We first consider the single-beam scenario (i.e., $K=1$), which allows us to isolate and explicitly examine the fundamental impact of antenna grouping determined by the connection matrix $\mathbf{\bar{C}}$. By omitting the user index $k$ for brevity, problem (P2) is equivalent to the following array-gain maximization problem\footnote{Once the optimal solution $\big\{\mathbf{\bar{C}}^\star,\bm{\bar{\phi}}^\star\big\}$ of problem (P2.1) is derived, the minimum transmit power can be determined by $\bar{P}^\star=\frac{\gamma N\sigma^2}{\alpha\left|\mathbf{a}(\theta)^H\mathbf{\bar{C}}^\star\bm{\bar{\phi}}^\star\right|^2}$.}\looseness=-1
\begin{subequations}
	\begin{alignat}{2}
		\mathrm{(P2.1)}:\quad&\underset{\mathbf{\bar{C}},\bm{\bar{\phi}}}{\max}\quad&&\big|\mathbf{a}(\theta)^H\mathbf{\bar{C}}\bm{\bar{\phi}}\big|\label{eq_P2_1_a}\\[-3pt]
		&~\mathrm{s.t.}&&\big|\bar{\phi}_l\big|=1,~l\in\mathcal{L},\label{eq_P2_1_b}\\
		&&&\eqref{eq_P2_e},~\eqref{eq_P2_f},~\eqref{eq_P2_c}.\nonumber
	\end{alignat}
\end{subequations}
The objective function can be further rewritten as
\begin{align}\label{eq_SRSU_obj}
	\big|\mathbf{a}(\theta)^H\mathbf{\bar{C}}\bm{\bar{\phi}}\big|&=\left|\sum\nolimits_{l\in\mathcal{L}}\bar{\phi}_l\sum\nolimits_{n\in\mathcal{N}}\mathbf{\bar{C}}[n,l]e^{j(n-1)\pi\sin(\theta)}\right|\nonumber\\
	&\hspace{-8pt}=\left|\sum\nolimits_{l\in\mathcal{L}}\!\bar{\phi}_l\sum\nolimits_{n\in\mathcal{N}_l}\!\!\exp(j(n\!-\!1)\pi\sin(\theta))\right|,
\end{align}
where we denote $\mathcal{N}_l$, $l\in\mathcal{L}$, as the group of antennas connected to the $l$-th PS, i.e., $\mathcal{N}_l\triangleq\big\{n\in\mathcal{N}~|~\mathbf{\bar{C}}[n,l]=1\big\}$. For any fixed grouping induced by $\mathbf{\bar{C}}$, the optimal phase shift of each PS is obtained by coherently aligning the summed contributions of the antennas connected to that PS, i.e.,
\begin{align}
	\bar{\phi}_l&=\frac{\sum_{n\in\mathcal{N}_l}\mathbf{a}(\theta)[n]}{\big|\sum_{n\in\mathcal{N}_l}\mathbf{a}(\theta)[n]\big|}=\frac{\sum_{n\in\mathcal{N}_l}e^{-j(n-1)\pi\sin(\theta)}}{\big|\sum_{n\in\mathcal{N}_l}e^{-j(n-1)\pi\sin(\theta)}\big|}\label{eq_SRSU_phi_l_1}\\
	&=\exp\Big(\!\!-\!j\arg\Big(\sum\nolimits_{n\in\mathcal{N}_l}\!e^{j(n-1)\pi\sin(\theta)}\Big)\Big),~l\in\mathcal{L}\label{eq_SRSU_phi_l_2}.
\end{align}
Substituting this closed-form optimal phase-shift solution back into \eqref{eq_SRSU_obj} reduces the design problem to a pure antenna grouping problem with respect to (w.r.t.) $\mathbf{\bar{C}}$, revealing that performance is governed by the phase consistency among the antennas connected to the same PS. We have
\begin{equation}\label{eq_SRSU_gain}
	\big|\mathbf{a}(\theta)^H\mathbf{\bar{C}}\bm{\bar{\phi}}\big|=\sum\nolimits_{l\in\mathcal{L}}\!\left|\sum\nolimits_{n\in\mathcal{N}_l}\!\exp(j(n\!-\!1)\pi\sin(\theta))\right|.
\end{equation}
Specifically, after applying the optimal per-group phase rotation, the array gain becomes the sum of magnitudes of phasor sums within each group. Hence, the objective is to group antennas such that, in the desired AoD $\theta$, the corresponding steering phases are as consistent as possible within each group. This interpretation directly links sparse-PS connection to array gain loss: if a group contains antennas with widely separated steering phases, their contributions partially cancel and the resulting gain decreases. In the following, we investigate two special cases to develop analytical insights and then characterize the relationship between the PS count and the maximum achievable array gain in the general case.

\subsubsection{Case 1} Assume the number of antennas is divisible by the PS count, i.e., $N/L\in\mathbb{N}^+$, and the steering-phase variation across the antenna array is smaller than one full cycle, i.e.,
\begin{equation}
	|(N-1)\pi\sin(\theta)|<2\pi~\Rightarrow~|\sin(\theta)|<\frac{2}{N-1}.\label{eq_SRSU_case1_sin}
\end{equation}
Under these conditions, the steering phase varies monotonically with the antenna index and spans less than one full cycle, i.e., $2\pi$. Therefore, motivated by steering-phase consistency, we consider an ordered grouping with $L$ equally sized antenna groups, which offers an analytically tractable structure and reveals valuable engineering insights. Particularly, if the following condition holds, the first group $\mathcal{N}_1$ should contain the antenna indices $\{N,1,2,\ldots,N/L-1\}$:
\begin{align}
	&2\pi-|(N-1)\pi\sin(\theta)|<|\pi\sin(\theta)|,\nonumber\\
	\Rightarrow~&|\sin(\theta)|>\frac{2}{N}~\overset{\mathrm{(a)}}{\Rightarrow}~\frac{2}{N}<|\sin(\theta)|<\frac{2}{N-1}.\label{eq_SRSU_case1_sin_1}
\end{align}
where (a) follows from \eqref{eq_SRSU_case1_sin}. Otherwise, if
\begin{equation}
	|\sin(\theta)|\le\frac{2}{N},\label{eq_SRSU_case1_sin_2}
\end{equation}
the first group $\mathcal{N}_1$ should instead contain the antenna indices $\{1,2,\ldots,N/L\}$.

The remaining antennas are then sequentially assigned to the groups $\mathcal{N}_l$, $l\ge2$, in ascending order of their indices. Specifically, under \eqref{eq_SRSU_case1_sin_1}, the groups are given by $\mathcal{N}_l=\{N(l-1)/L,N(l-1)/L+1,\ldots,Nl/L-1\}$, $l\ge2$, whereas under \eqref{eq_SRSU_case1_sin_2}, they are given by $\mathcal{N}_l=\{N(l-1)/L+1,N(l-1)/L+2,\ldots,Nl/L\}$, $l\ge2$. Based on the above antenna grouping strategy, the optimal phase shift for the $l$-th PS connected to $\mathcal{N}_l$ is obtained by normalizing the coherent sum of the corresponding steering-vector entries:
\begin{align}
	\hspace{-3pt}\bar{\phi}_l&\overset{\mathrm{(b)}}{=}\frac{\sum_{i=2}^{N/L}e^{-j(i-2)\pi\sin(\theta)}+e^{-j(N-1)\pi\sin(\theta)}}{\left|\sum_{i=2}^{N/L}e^{-j(i-2)\pi\sin(\theta)}+e^{-j(N-1)\pi\sin(\theta)}\right|},~l=1\nonumber\\[-2pt]
	&\overset{\mathrm{(b)}}{=}\frac{\sum_{i=1}^{N/L}e^{-j(N(l-1)/L+i-2)\pi\sin(\theta)}}{\left|\sum_{i=1}^{N/L}e^{-j(N(l-1)/L+i-2)\pi\sin(\theta)}\right|},~l\ge2,~\text{or}\label{eq_phi_case1_b}\\[-2pt]
	&\overset{\mathrm{(c)}}{=}\frac{\sum_{i=1}^{N/L}e^{-j(N(l-1)/L+i-1)\pi\sin(\theta)}}{\left|\sum_{i=1}^{N/L}e^{-j(N(l-1)/L+i-1)\pi\sin(\theta)}\right|},~l\in\mathcal{L},\label{eq_phi_case1_c}
\end{align}
where (b) and (c) hold under \eqref{eq_SRSU_case1_sin_1} and \eqref{eq_SRSU_case1_sin_2}, respectively. By exploiting the structure of steering vector and properties of complex exponential summation, \eqref{eq_phi_case1_b} can be further simplified as\looseness=-1
\begin{equation}\label{eq_phi_case1}
	\bar{\phi}_l=\exp\Big(\!-\!j\,\Big(\,\frac{N}{L}l-\frac{N}{2L}-\frac{3}{2}\,\Big)\,\pi\sin(\theta)\Big),~l\ge2.
\end{equation}
Similarly, \eqref{eq_phi_case1_c} yields the following closed-form phase shift:
\begin{equation}\label{eq_phi_case2}
	\bar{\phi}_l=\exp\Big(\!-\!j\,\Big(\,\frac{N}{L}l-\frac{N}{2L}-\frac{1}{2}\,\Big)\,\pi\sin(\theta)\Big),~l\in\mathcal{L}.
\end{equation}

These closed-form phase shifts provide an intuitive physical interpretation: each PS phase shift corresponds to the steering phase at the equivalent center of its antenna group. As the group moves along the array, this center shifts accordingly, and hence the required PS phase shift changes linearly with the group index $l$. The two cases in \eqref{eq_phi_case1} and \eqref{eq_phi_case2} differ only because their antenna groups have different centers. Therefore, the proposed sparse-PS design approximates the ideal full-PS steering phases by assigning one representative phase to each antenna group, and the resulting beamforming loss comes from the residual phase mismatch inside each group.

\subsubsection{Case 2} If the phase of the $\bar{L}$-th entry of the steering vector, where $\bar{L}\le{N}$ and $\bar{L}\le{L+1}$, is equal to $2\pi$ or $-2\pi$, i.e.,\looseness=-1
\begin{equation}
	\big|(\bar{L}-1)\pi\sin(\theta)\big|=2\pi~\Rightarrow~|\sin(\theta)|=\frac{2}{\bar{L}-1}.
\end{equation}
Under this assumption, certain antenna elements may share identical steering phases. Specifically, phase overlaps occur when the corresponding antenna indices $n$ satisfy
\vspace{0pt}
\begin{equation}\label{eq_SRSU_case2_antenna_index}
	n=\bar{l}+\zeta\big(\bar{L}-1\big),~\bar{l}\in\mathcal{\bar{L}}\triangleq\big\{1,\ldots,\bar{L}-1\big\},
\end{equation}
where $\zeta\in\mathbb{N}$ should satisfy $n\le{N}$, i.e., $\zeta=0,\ldots,$ $\big\lfloor\big(N-\bar{l}\big)/\big(\bar{L}-1\big)\big\rfloor$. Accordingly, the steering phases contain only $\big(\bar L-1\big)$ distinct values, and antennas with identical steering phases can therefore be connected to the same PS without introducing intra-group phase mismatch. This optimally partitions the antennas into $\big(\bar L-1\big)$ groups, with the antenna indices in group $\mathcal{N}_{\bar{l}}$ given by \eqref{eq_SRSU_case2_antenna_index}. We connect the antennas in $\mathcal N_{\bar l}$ to the $\bar l$-th PS, for $\bar l=1,\ldots,\bar L-1$. If $\bar L-1<L$, the remaining PSs can be assigned by splitting existing antenna groups with internally identical steering phases and imposing the same phase shift on the resulting subgroups, which preserves the coherent sum and thus the full array gain. The optimal phase shift for the $\bar{l}$-th active PS connected to $\mathcal{N}_{\bar{l}}$ is given by
\begin{align}
	\bar{\phi}_{\bar{l}}&=\frac{\sum_{\zeta=0}^{\left\lfloor\frac{N-\bar{l}}{\bar{L}-1}\right\rfloor}\exp\big(\!-\!j\big(\bar{l}+\zeta\big(\bar{L}-1\big)-1\big)\pi\sin(\theta)\big)}{\bigg|\sum_{\zeta=0}^{\left\lfloor\frac{N-\bar{l}}{\bar{L}-1}\right\rfloor}\exp\big(\!-\!j\big(\bar{l}+\zeta\big(\bar{L}-1\big)-1\big)\pi\sin(\theta)\big)\bigg|}\nonumber\\[-3pt]
	&\overset{\mathrm{(d)}}{=}\exp\bigg(\!\!-\!j\frac{\big(\bar{l}-1\big)2\pi}{\bar{L}-1}\bigg)\label{eq_SRSU_case2_phi_d},~\text{or}\\[-2pt]
	&\overset{\mathrm{(e)}}{=}\exp\bigg(j\frac{\big(\bar{l}-1\big)2\pi}{\bar{L}-1}\bigg),~\bar{l}\in\mathcal{\bar{L}},\label{eq_SRSU_case2_phi_e}
\end{align}
where (d) holds for $\sin(\theta)=2/\big(\bar{L}-1\big)$, and (e) holds for $\sin(\theta)=-2/\big(\bar{L}-1\big)$.

\vspace{-1pt}
\begin{lemma}\label{Lemma1}
	\textit{Based on the antenna grouping strategy described in Case 2 and the corresponding optimal phase-shift expressions in \eqref{eq_SRSU_case2_phi_d} and \eqref{eq_SRSU_case2_phi_e}, \eqref{eq_SRSU_gain} can be further expressed as}
	\begin{align}
		\big|\mathbf{a}(\theta)^H\mathbf{\bar{C}}\bm{\bar{\phi}}\big|&=\sum\nolimits_{\bar{l}\in\bar{\mathcal{L}}}\bigg|\sum\nolimits_{\zeta=0}^{\left\lfloor\frac{N-\bar{l}}{\bar{L}-1}\right\rfloor}e^{\pm{j}\big(\big(\bar{l}-1\big)2\pi/\big(\bar{L}-1\big)+2\zeta\pi\big)}\bigg|\nonumber\\[-2pt]
		&=\sum\nolimits_{\bar{l}\in\bar{\mathcal{L}}}\bigg(\bigg\lfloor\frac{N-\bar{l}}{\bar{L}-1}\bigg\rfloor+1\bigg)=N.\label{eq_lemma1}
	\end{align}
\end{lemma}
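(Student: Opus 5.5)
The plan is to start from the grouped-gain expression \eqref{eq_SRSU_gain} and substitute the Case 2 grouping \eqref{eq_SRSU_case2_antenna_index}. Under that grouping, the $\bar L-1$ active PSs have groups $\mathcal{N}_{\bar l}=\{\bar l+\zeta(\bar L-1):\zeta=0,\ldots,\lfloor (N-\bar l)/(\bar L-1)\rfloor\}$. If $\bar L-1<L$, some groups are split further; these carry identical phase shifts and contain antennas with identical steering phases, so their magnitudes simply add. Hence it suffices to evaluate the sum over $\bar l\in\bar{\mathcal L}$. For each antenna $n=\bar l+\zeta(\bar L-1)$, I would write the steering phase as
\[
(n-1)\pi\sin(\theta)=\pm\big((\bar l-1)+\zeta(\bar L-1)\big)\frac{2\pi}{\bar L-1}=\pm\Big(\frac{(\bar l-1)2\pi}{\bar L-1}+2\zeta\pi\Big),
\]
using $\sin(\theta)=\pm 2/(\bar L-1)$. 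This gives the first line of \eqref{eq_lemma1} directly.

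Next, I note that $e^{\pm j2\zeta\pi}=1$, so every term in the inner sum equals the common phasor $e^{\pm j(\bar l-1)2\pi/(\bar L-1)}$. The inner modulus is therefore exactly the number of terms, $\lfloor (N-\bar l)/(\bar L-1)\rfloor+1$. The same observation confirms that the phase shifts \eqref{eq_SRSU_case2_phi_d}/\eqref{eq_SRSU_case2_phi_e} are the normalized coherent sums, so they are consistent with the optimal rule \eqref{eq_SRSU_phi_l_1}.

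The final equality is a counting step. I would argue that the sets $\mathcal N_{\bar l}$, $\bar l=1,\ldots,\bar L-1$, partition $\mathcal N$. Each $n\in\mathcal N$ has a unique representation $n-1=(\bar l-1)+\zeta(\bar L-1)$ with $0\le \bar l-1\le \bar L-2$, namely by division with remainder by $\bar L-1$. The constraint $n\le N$ corresponds exactly to $\zeta\le\lfloor (N-\bar l)/(\bar L-1)\rfloor$. Thus $\sum_{\bar l}|\mathcal N_{\bar l}|=N$, and the sum of the counts equals $N$.

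The main obstacle I anticipate is keeping the bookkeeping clean rather than any real difficulty. Two points need care. First, when $\bar L-1>N$ could occur, some groups would be empty; this is excluded by $\bar L\le N$, which guarantees each $\bar l\le \bar L-1\le N-1$, so every group is nonempty and \eqref{eq_P2_f} holds. Second, the splitting of groups when $\bar L-1<L$ must not change the value; this follows from the triangle inequality holding with equality for aligned phasors. Finally, I would remark that $N$ is the full-PS upper bound, since $|\mathbf a(\theta)^H\mathbf{\bar C}\bm{\bar\phi}|\le\sum_n|\mathbf a(\theta)[n]|=N$. This establishes optimality of the grouping as well.
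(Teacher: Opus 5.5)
Your proof is correct, and for the one nontrivial step it takes a different route from the paper. The paper treats the phasor collapse ($e^{\pm j2\zeta\pi}=1$, so each inner modulus equals its number of terms) as immediate. Its appendix is devoted entirely to the floor-sum identity $\sum_{\bar l=1}^{\bar L-1}\lfloor (N-\bar l)/(\bar L-1)\rfloor=N-\bar L+1$, which it proves arithmetically. It divides $N$ itself by $\bar L-1$, writing $N=(\bar L-1)\bar\kappa+\bar r$, and shows that each floor equals $\bar\kappa$ for $\bar l\le\bar r$ and $\bar\kappa-1$ for $\bar l>\bar r$. Summing gives $\bar r\bar\kappa+(\bar L-1-\bar r)(\bar\kappa-1)=N-\bar L+1$. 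You instead apply division with remainder to each index $n-1$ separately. This shows that the residue classes $\mathcal N_{\bar l}$ partition $\mathcal N$ and that $\lfloor (N-\bar l)/(\bar L-1)\rfloor+1=|\mathcal N_{\bar l}|$, so the sum equals $N$ by counting. The paper's version is a self-contained numerical check that needs no set-theoretic interpretation. Yours is shorter and explains why the identity holds. It also shows as a by-product that the Case 2 grouping is a feasible connection: each antenna lies on exactly one PS, giving \eqref{eq_P2_e}, and every group is nonempty because $\bar L\le N$, giving \eqref{eq_P2_f}. The paper asserts this partition in the text but never verifies it. Finally, your remark that $N=\sum_n|\mathbf a(\theta)[n]|$ upper-bounds the gain supplies the optimality claim that the paper only states after the lemma.
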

\vspace{-8pt}
\begin{proof}
	Please refer to Appendix \ref{appendix_A}.
\end{proof}

\vspace{-6pt}
According to \textit{Lemma \ref{Lemma1}}, the antenna grouping strategy described in \textit{Case 2}, together with the phase shifts $\big\{\bar{\phi}_{\bar{l}}\big\}_{\bar{l}\in\mathcal{\bar{L}}}$, achieves the maximum array gain $\big|\mathbf{a}(\theta)^H\mathbf{\bar{C}}\bm{\bar{\phi}}\big|=N$, which indicates that reducing the PS count does not incur any beamforming performance loss.

\subsubsection{General Case} If the assumptions adopted in \textit{Case 1} and \textit{Case 2} do not hold, e.g., when the antenna count $N$ or the magnitude $|\sin(\theta)|$ becomes larger, a closed-form characterization is usually unavailable. In such general scenarios, we present the following proposition to reveal the relationship between the PS count $L$ and the maximum achievable array gain.\looseness=-1

\vspace{-1pt}
\begin{proposition}\label{Proposition1}
	\textit{The maximum array gain $\big|\mathbf{a}(\theta)^H\mathbf{\bar{C}}_\mathrm{opt}\bm{\bar{\phi}}_\mathrm{opt}\big|$ is monotonically nondecreasing w.r.t. the PS count $L$.}
\end{proposition}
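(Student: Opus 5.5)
The plan is a constructive argument: from an optimal grouping with $L$ PSs I will build a feasible grouping with $L+1$ PSs whose array gain is at least as large. Then the optimum for $L+1$ cannot be smaller than the optimum for $L$. By \eqref{eq_SRSU_gain}, once the phase shifts are set optimally as in \eqref{eq_SRSU_phi_l_2}, the array gain depends only on the partition $\{\mathcal{N}_l\}_{l\in\mathcal{L}}$ of $\mathcal{N}$ into nonempty groups. Write the gain as $G(\{\mathcal{N}_l\})=\sum_{l}\big|\sum_{n\in\mathcal{N}_l}a_n\big|$ with $a_n\triangleq e^{j(n-1)\pi\sin(\theta)}$. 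It therefore suffices to compare optimal partitions into $L$ and $L+1$ nonempty groups, for $L+1\le N$.

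First, I take an optimal partition $\{\mathcal{N}_l^\star\}_{l=1}^{L}$ for PS count $L$; it exists because there are finitely many partitions. Since $L<N$ and the groups are disjoint and cover all $N$ antennas, the pigeonhole principle gives some group $\mathcal{N}_{l_0}^\star$ with $|\mathcal{N}_{l_0}^\star|\ge2$. I split it into two nonempty disjoint subsets $\mathcal{A}\cup\mathcal{B}=\mathcal{N}_{l_0}^\star$ and keep the other groups unchanged. This gives $L+1$ nonempty groups. The corresponding $\bar{\mathbf{C}}$ satisfies \eqref{eq_P2_e} and \eqref{eq_P2_f}, so it is feasible for (P2.1) with PS count $L+1$.

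Second, I compare the gains using the triangle inequality:
$\big|\sum_{n\in\mathcal{A}}a_n\big|+\big|\sum_{n\in\mathcal{B}}a_n\big|\ge\big|\sum_{n\in\mathcal{N}_{l_0}^\star}a_n\big|$.
All other terms in \eqref{eq_SRSU_gain} are unchanged, so the split partition has gain at least $G(\{\mathcal{N}_l^\star\})$. Equivalently, one can avoid re-optimizing the phases: give both new PSs the phase $\bar{\phi}_{l_0}^\star$. The objective in \eqref{eq_SRSU_obj} then stays exactly the same, and re-optimizing the phases can only increase it. In either form, the maximal gain for $L+1$ PSs is at least the maximal gain for $L$ PSs. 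Applying this from $L$ to $L+1$ for every $L$ with $R\le L<N$ establishes monotonic nondecrease.

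The only delicate point is constraint \eqref{eq_P2_f}, which requires every PS to be used. This is exactly why I need a group of size at least two, and why I restrict to $L+1\le N$, beyond which the problem is infeasible. I do not expect a real obstacle otherwise. As a remark, I would note that the inequality need not be strict: Case 2 and Lemma~\ref{Lemma1} show that the gain can already reach $N$ with $\bar L-1$ PSs. Hence the statement is "nondecreasing" rather than "increasing."
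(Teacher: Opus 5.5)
Your proposal is correct and follows essentially the same route as the paper's proof in Appendix~\ref{appendix_B}. Start from an $L$-PS optimum, pick a group with at least two antennas (guaranteed since $L<N$), move part of it to the new PS with the same phase shift so the objective is unchanged, and conclude $G^\star(L+1)\ge G^\star(L)$. The paper splits off a single antenna and copies the phase, which is exactly your ``equivalent'' variant, and your triangle-inequality version with re-optimized phases is the same step viewed through \eqref{eq_SRSU_gain}.
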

\vspace{-9pt}
\begin{proof}
	Please refer to Appendix \ref{appendix_B}.
\end{proof}
\vspace{-4pt}
\textit{Proposition \ref{Proposition1}} indicates that one may choose the minimum PS count that achieves an acceptable array-gain loss relative to the full-PS architecture, rather than deploying excessive PSs.

\vspace{-6pt}
\subsection{Beam Switching with Common PS Connection}\label{subsec3_2}

\begin{algorithm}[t]
	\renewcommand{\algorithmicrequire}{\textbf{Input:}}
	\renewcommand{\algorithmicensure}{\textbf{Output:}}
	\caption{\hspace{2pt}MICP Algorithm for Solving Problem (P2.3)}
	\begin{algorithmic}[1]
		\label{alg1}
		\REQUIRE Initial point $\mathbf{\bar{C}}^{[0]}$ and convergence tolerance $\epsilon$.
		\ENSURE The optimized solution $\big\{\mathbf{\bar{C}}^\star,\big\{\bar{P}_k^\star\big\}_{k\in\mathcal{K}}\big\}$.
		\STATE Set iteration index $j=0$.
		\REPEAT
		\STATE Update $\big\{\mathbf{\bar{C}}^{[j+1]},\big\{\bar{P}_k^{[j+1]}\big\}_{k\in\mathcal{K}}\big\}$ by solving problem (P2.3) at a given point $\mathbf{\bar{C}}^{[j]}$.
		\STATE Set $j=j+1$.
		\UNTIL The fractional decrease in the objective value of problem (P2.3) is below $\epsilon$.
	\end{algorithmic}
\end{algorithm}
\setlength{\textfloatsep}{5pt}

In the beam-switching scenario (i.e., $K>1$), a common connection matrix supports multiple users/directions with distinct AoDs via a time-division manner. The phase-shift vectors $\left\{\bm{\bar{\phi}}_k\right\}_{k\in\mathcal{K}}$ can still be designed according to \eqref{eq_SRSU_phi_l_1} so as to align the phases of the entries in $\mathbf{h}_k^H\mathbf{\bar{C}}$. Specifically, we have\looseness=-1
\vspace{0pt}
\begin{equation}\label{eq_SRMU_phi_l}
	\bar{\phi}_{k,l}=\frac{\mathbf{\bar{C}}[:,l]^T\mathbf{a}(\theta_k)}{\big|\mathbf{\bar{C}}[:,l]^T\mathbf{a}(\theta_k)\big|},~k\in\mathcal{K},~l\in\mathcal{L}.
\end{equation}
By substituting the optimal phase-shift solution derived above into problem (P2), we can equivalently reformulate it as
\begin{subequations}
	\begin{alignat}{2}
		\mathrm{(P2.2)}:~&\underset{\mathbf{\bar{C}},\left\{\bar{P}_k\right\}_{k\in\mathcal{K}}}{\min}~&&\sum\nolimits_{k\in\mathcal{K}}\bar{P}_k\\[-4pt]
		&~\quad\mathrm{s.t.}&&\hspace{-2pt}\frac{\bar{P}_k\alpha_k}{N\sigma_k^2}\big\|\mathbf{a}(\theta_k)^H\mathbf{\bar{C}}\big\|_1^2\ge\gamma_k,~k\in\mathcal{K},\label{eq_P2_2_b}\\[-1pt]
		&&&\hspace{-1pt}\eqref{eq_P2_e},~\eqref{eq_P2_f},~\eqref{eq_P2_c}.\nonumber
	\end{alignat}
\end{subequations}
However, problem (P2.2) is still an intractable MINLP problem due to the non-convex QoS constraint \eqref{eq_P2_2_b} and the binary nature of $\mathbf{\bar{C}}$. To facilitate subsequent development, we equivalently rewrite \eqref{eq_P2_2_b} as
\vspace{-3pt}
\begin{equation}\label{eq_P2_2_b_re}
	\big\|\mathbf{a}(\theta_k)^H\mathbf{\bar{C}}\big\|_1\ge\sqrt{\frac{N\sigma_k^2\gamma_k}{\bar{P}_k\alpha_k}},~k\in\mathcal{K}.
\end{equation}
Let $f_k\big(\mathbf{\bar{C}}\big)\triangleq\big\|\mathbf{a}(\theta_k)^H\mathbf{\bar{C}}\big\|_1$. Thanks to the convexity of the left-hand side of \eqref{eq_P2_2_b_re}, we can derive a global lower bound of $f_k\big(\mathbf{\bar{C}}\big)$ as a surrogate function by employing its first-order Taylor-expansion-based affine under-estimator, yielding\footnote{Although the entries of the PS connection matrix $\mathbf{\bar{C}}$ are inherently binary, a valid lower bound of $f_k\big(\mathbf{\bar{C}}\big)$ w.r.t. $\mathbf{\bar{C}}$ can still be derived from a global affine lower bound of $f_k(\mathbf{\mathbf{A}})$ w.r.t. a continuous variable $\mathbf{A}$.}
\begin{equation}\label{eq_F2_LB}
	f_k\big(\mathbf{\bar{C}}\big)\ge f_k\big(\mathbf{\bar{C}}^{[j]}\big)+\operatorname{tr}\big(\big(\nabla_{\mathbf{\bar{C}}}f_k\big(\mathbf{\bar{C}}^{[j]}\big)\big)^T\big(\mathbf{\bar{C}}-\mathbf{\bar{C}}^{[j]}\big)\big),
\end{equation}
where $\mathbf{\bar{C}}^{[j]}$ denotes the given local point, and $\nabla_{\mathbf{\bar{C}}}f_k\big(\mathbf{\bar{C}}\big)$ is the gradient matrix over $\mathbf{\bar{C}}$, with its derivation provided in \textit{Appendix \ref{appendix_C}}. Accordingly, a convex subset of \eqref{eq_P2_2_b_re} is given by
\begin{equation}\label{eq_P2_3_b}
	f_k\big(\mathbf{\bar{C}}^{[j]}\big)\!+\!\operatorname{tr}\big(\!\big(\nabla_{\mathbf{\bar{C}}}f_k\big(\mathbf{\bar{C}}^{[j]}\big)\!\big)^T\!\big(\mathbf{\bar{C}}\!-\!\mathbf{\bar{C}}^{[j]}\big)\!\big)\!\ge\!\sqrt{\frac{N\sigma_k^2\gamma_k}{\bar{P}_k\alpha_k}},
\end{equation}
such that by replacing \eqref{eq_P2_2_b} in problem (P2.2) with \eqref{eq_P2_3_b}, an upper bound of the objective value can be obtained by solving\looseness=-1
\newcounter{al1}
\setcounter{al1}{\value{equation}}
\begin{subequations}
	\begin{alignat}{2}
		\mathrm{(P2.3)}:\quad&\underset{\mathbf{\bar{C}},\{\bar{P}_k\}_{k\in\mathcal{K}}}{\min}\quad&&\sum\nolimits_{k\in\mathcal{K}}\bar{P}_k\nonumber\\[-2pt]
		&\quad~\mathrm{s.t.}&&\eqref{eq_P2_e},\eqref{eq_P2_f},\eqref{eq_P2_c},\eqref{eq_P2_3_b}.\nonumber
	\end{alignat}
\end{subequations}

\noindent Problem (P2.3) is a mixed-integer convex program (MICP) and can be solved optimally by standard solvers such as CVX with MOSEK \cite{Grant_CVX_software}. The proposed optimization procedure is outlined in \textit{Algorithm \ref{alg1}}. The resulting objective sequence is monotonically nonincreasing since the previous solution remains feasible for the current update. Given that it is lower bounded by zero, this sequence is guaranteed to converge.
\setcounter{equation}{\value{al1}}

\vspace{-6pt}
\section{Multiple RF-Chains}\label{sec4}
In this section, we address the general multi-RF-chain scenario ($R\ge2$). Under such a configuration, problem (P1) is inherently difficult to solve directly because the binary PS connection and assignment matrices are deeply coupled with the continuous digital and analog beamformers. To obtain an efficient yet high-quality solution, we develop an iterative QoS-MM algorithm. The proposed algorithm alternately performs three main operations: i) a low-complexity digital-precoder update based on uplink--downlink duality; ii) a unit-modulus-preserving phase-shift MM update; and iii) a joint PS connection and assignment update built from linear MM surrogates. Below, the reformulated optimization problem is first presented explicitly, and then solved in three blocks.

\vspace{-10pt}
\subsection{Equivalent Signal Representation}
To facilitate subsequent optimization, we redefine the compact connection matrix $\mathbf{\hat{C}}$ as
\begin{equation}
	\mathbf{\hat{C}}\triangleq\big[\mathbf C_1^T,\mathbf C_2^T,\ldots,\mathbf C_R^T\big]\in\{0,1\}^{L\times N}.
\end{equation}
For user $k\in\mathcal{K}$, redefine the block-diagonal sparsified channel matrix $\mathbf{\hat{H}}_k$ as
{\small
\begin{align}
	\mathbf{\hat{H}}_k\!\triangleq\!\left[\begin{matrix}
	\mathbf h_k^*[1\!:\!M] \!\!\!&\!\!\! \mathbf 0^{M\times1} \!\!\!&\!\!\! \cdots \!\!\!&\!\!\! \mathbf 0^{M\times1} \\
	\mathbf 0^{M\times1} \!\!\!&\!\!\! \mathbf h_k^*[M\!\!+\!\!1\!:\!2M] \!\!\!&\!\!\! \cdots \!\!\!&\!\!\! \mathbf 0^{M\times1} \\[-7pt]
	\vdots \!\!\!&\!\!\! \vdots \!\!\!&\!\!\! \ddots \!\!\!&\!\!\! \vdots \\[-1pt]
	\mathbf 0^{M\times1} \!\!\!&\!\!\! \mathbf 0^{M\times1} \!\!\!&\!\!\! \cdots \!\!\!&\!\!\! \mathbf h_k^*[RM\!\!-\!\!M\!\!+\!\!1\!:\!RM]\end{matrix}\right]\!\in\mathbb C^{N\times R}\!.
\end{align}}%
Then, for stream $i$ at user $k$, the equivalent signal coefficient is\looseness=-1
\begin{equation}\label{eq_gki_def_new}
	g_{k,i}\triangleq\bm\phi^T\mathbf{\hat{C}}\mathbf{\hat{H}}_k\mathbf d_i,~k,i\in\mathcal{K}.
\end{equation}
Accordingly, the original problem (P1) can be rewritten as
\begin{subequations}\label{eq_P3_new}
	\begin{alignat}{2}
		\mathrm{(P3)}:~
		&\underset{\{\mathbf{d}_k\}_{k\in\mathcal{K}},\mathbf{\hat{C}},\mathbf S,\bm\phi}{\min}~
		&&\sum\nolimits_{k\in\mathcal{K}}\|\mathbf{d}_k\|_2^2\label{eq_P3_new_a}\\[-4pt]
		&\hspace{4pt}\mathrm{s.t.}
		&&\hspace{-31pt}\frac{|g_{k,k}|^2}{\sum\nolimits_{i\in\mathcal K,i\ne k}|g_{k,i}|^2\!+\!M\sigma_k^2}\ge\gamma_k,~k\in\mathcal K,\label{eq_P3_new_b}\\[-1pt]
		&&&\hspace{-31pt}\sum\nolimits_{l\in\mathcal L}\mathbf{\hat{C}}[l,n]=1,~n\in\mathcal N,\label{eq_P3_new_c}\\[-1pt]
		&&&\hspace{-31pt}\sum\nolimits_{r\in\mathcal R}\mathbf S[r,l]=1,~l\in\mathcal L,\label{eq_P3_new_d}\\[-1pt]
		&&&\hspace{-30pt}\mathbf{\hat{C}}[l,n]\le \mathbf S[r(n),l],~l\in\mathcal L,~n\in\mathcal N,\label{eq_P3_new_e}\\[-1pt]
		&&&\hspace{-30pt}\mathbf S[r,l]\le \sum\nolimits_{n\in\mathcal N_r}\!\!\!\mathbf{\hat{C}}[l,n],~r\in\mathcal R,~l\in\mathcal L,\label{eq_P3_new_f}\\[-1pt]
		&&&\hspace{-30pt}\mathbf{\hat{C}}[l,n]\in\{0,1\},~l\in\mathcal{L},~n\in\mathcal{N},\label{eq_P3_new_g}\\[-1pt]
		&&&\hspace{-30pt}\eqref{eq_P1_d},\eqref{eq_P1_e},\nonumber
	\end{alignat}
\end{subequations}
where $r(n)\triangleq\lfloor(n-1)/M\rfloor+1$ denotes the RF-chain/sub-array associated with the $n$-th antenna, and $\mathcal N_r\!\triangleq\{(r-1)M+1,\ldots,rM\}$ denotes the antenna set of the $r$-th sub-array. Constraints \eqref{eq_P3_new_c}--\eqref{eq_P3_new_f} are the equivalent compact form of \eqref{eq_PS_allocation_1}--\eqref{eq_PS_allocation_4}. In the following subsections, we derive the three update blocks of the proposed QoS-MM algorithm.

\vspace{-10pt}
\subsection{Digital Precoder Update}\label{subsec_42}
For fixed $\big\{\mathbf{\hat{C}},\mathbf S,\bm\phi\big\}$, define the effective channel vector $\mathbf g_k\in\mathbb C^{R\times1}$ as
\vspace{-2pt}
\begin{equation}
	\mathbf g_k^H\triangleq\bm\phi^T\mathbf{\hat{C}}\mathbf{\hat{H}}_k,~k\in\mathcal K.
	\label{eq_effective_channel_new}
\end{equation}
Then, $g_{k,i}=\mathbf g_k^H\mathbf d_i$. The digital precoder $\{\mathbf{d}_k\}_{k\in\mathcal{K}}$ update is
\begin{subequations}\label{eq_D_update_problem_new}
	\begin{alignat}{2}
		\mathrm{(P3.1)}:~&\underset{\{\mathbf d_k\}_{k\in\mathcal K}}{\min}~
		&&\sum\nolimits_{k\in\mathcal K}\|\mathbf d_k\|_2^2\label{eq_D_update_problem_new_a}\\[-5pt]
		&\hspace{-12pt}\mathrm{s.t.}\quad
		&&\hspace{-25pt}\frac{\big|\mathbf g_k^H\mathbf d_k\big|^2}{\sum\nolimits_{i\in\mathcal K,i\ne k}\big|\mathbf g_k^H\mathbf d_i\big|^2+M\sigma_k^2}\ge\gamma_k,~k\in\mathcal K.\label{eq_D_update_problem_new_b}
	\end{alignat}
\end{subequations}
Although problem (P3.1) can be solved as a second-order cone program (SOCP) \cite{WangHH_interference_MA}, we exploit uplink--downlink duality to obtain an exact low-complexity solution \cite{Schubert_up_downlink_duality}.

Let $n_k\triangleq M\sigma_k^2$. The QoS requirement for user $k$ in \eqref{eq_D_update_problem_new_b} can be equivalently written as
\vspace{0pt}
\begin{equation}
	r_k(\mathbf D)\triangleq\gamma_k\sum\nolimits_{i\in\mathcal K,i\ne k}\big|\mathbf g_k^H\mathbf d_i\big|^2\!-\big|\mathbf g_k^H\mathbf d_k\big|^2\!+\gamma_k n_k\le 0,
	\label{eq_D_residual_new}
\end{equation}
where $\mathbf{D}\triangleq[\mathbf{d}_1,\ldots,\mathbf{d}_K]$. Let $\eta_k\ge0$ be the Lagrange multiplier associated with \eqref{eq_D_residual_new}, and define the virtual uplink power\looseness=-1
\begin{equation}
	\chi_k\triangleq\eta_k\gamma_k,~k\in\mathcal{K}.
\end{equation}
The corresponding virtual uplink covariance matrix is
\begin{equation}
	\mathbf A(\bm\chi)\triangleq\mathbf I^R+
	\sum\nolimits_{k\in\mathcal K}\chi_k\mathbf g_k\mathbf g_k^H,
	\label{eq_A_chi_new}
\end{equation}
where $\bm{\chi}\triangleq[\chi_1,\ldots,\chi_K]^T$. For the dual function to be bounded below, we need
\begin{equation}\label{eq_dual_ineq}
	\mathbf{A}(\bm{\chi})-\chi_k\left(1+\gamma_k^{-1}\right)\mathbf{g}_k\mathbf{g}_k^H\succeq\mathbf{0},~k\in\mathcal{K}.
\end{equation}
Using the rank-one matrix inequality, \eqref{eq_dual_ineq} is equivalent to
\begin{equation}\label{eq_dual_ineq_re}
	\chi_k\left(1+\gamma_k^{-1}\right)\mathbf{g}_k^H\mathbf{A}(\bm{\chi})^{-1}\mathbf{g}_k\le1,~k\in\mathcal{K}.
\end{equation}
The above inequalities are tight at the optimum, yielding the following fixed-point iteration given the current point $\bm{\chi}^{[i]}$:
\begin{equation}\label{eq_chi_fixed_point_new}
	\chi_k^{[i+1]}=\frac{\gamma_k}{(1+\gamma_k)\mathbf g_k^H\mathbf A(\bm{\chi}^{[i]})^{-1}\mathbf g_k},~k\in\mathcal K.
\end{equation}

From the Karush-Kuhn-Tucker (KKT) stationarity condition, it holds that
\begin{equation}\label{eq_KKT_stationnarity}
	\left(\mathbf{A}(\bm{\chi})-\chi_k\left(1+\gamma_k^{-1}\right)\mathbf{g}_k\mathbf{g}_k^H\right)\mathbf{d}_k=\mathbf{0},~k\in\mathcal{K}.
\end{equation}
Since $\mathbf{A}(\bm{\chi})\succ\mathbf{0}$, \eqref{eq_KKT_stationnarity} implies that $\mathbf{d}_k$ is aligned with $\mathbf{A}(\bm{\chi})^{-1}\mathbf{g}_k$. Thus, for the converged $\bm{\chi}$, the normalized optimal downlink beam direction is expressed as
\begin{equation}\label{eq_d_direction_new}
	\mathbf{d}_k^\mathrm{N}=\frac{\mathbf A(\bm\chi)^{-1}\mathbf g_k}{\|\mathbf A(\bm\chi)^{-1}\mathbf g_k\|_2},~k\in\mathcal K.
\end{equation}

The downlink transmit powers $\{p_k\}_{k\in\mathcal{K}}$ are then derived by enforcing all QoS constraints to be active. Specifically, by defining $\mathbf B\in\mathbb R^{K\times K}$ as
\begin{equation}
	\mathbf B[k,k]\triangleq\frac{\big|\mathbf g_k^H\mathbf d_k^\mathrm{N}\big|^2}{\gamma_k},\quad\mathbf B[k,i]=-\big|\mathbf g_k^H\mathbf d_i^\mathrm{N}\big|^2,~\forall i\ne k,
\end{equation}
the power vector $\mathbf{p}\triangleq[p_1,\ldots,p_K]^T$ is determined by solving
\begin{equation}\label{eq_power_linear_new}
	\mathbf B\mathbf{p}=[n_1,\ldots,n_K]^T.
\end{equation}
Consequently, the final digital precoders $\{\mathbf{d}_k\}_{k\in\mathcal{K}}$ are given by\looseness=-1
\begin{equation}\label{eq_d_final_new}
	\mathbf d_k=\sqrt{p_k}\,\mathbf d_k^\mathrm{N} \exp\big(\!-\!j\arg\big(\mathbf g_k^H\mathbf d_k^\mathrm{N}\big)\big),~k\in\mathcal K.
\end{equation}
Furthermore, the QoS multipliers adopted in the subsequent MM updates are recovered via
\begin{equation}\label{eq_eta_from_chi_new}
	\eta_k=\chi_k/\gamma_k,~k\in\mathcal K.
\end{equation}
Upon convergence of the fixed-point iteration, this update yields the optimal solution to problem (P3.1).

\vspace{-8pt}
\subsection{PS Phase-Shift Update}\label{subsec_43}
For fixed $\big\{\!\big\{\!\mathbf{d}_k\!\big\}_{k\in\mathcal{K}},\!\mathbf{\hat{C}},\mathbf S\big\}$, the transmit power $\sum_{k\in\mathcal{K}}\!\|\mathbf{d}_k\|_2^2$ is independent of $\bm\phi$. Hence, we update the PS phase shifts $\bm\phi$ by minimizing dual-weighted QoS residuals, which measures the difficulty of satisfying the QoS requirements under the current digital precoders and PS connections. After obtaining update PS phase shifts, the digital precoders are re-optimized to evaluate the actual transmit power. To this end, we define
\begin{equation}\label{eq_qki_v_def_new}
	\mathbf q_{k,i}\triangleq\mathbf{\hat{C}}\mathbf{\hat{H}}_k\mathbf d_i\in\mathbb{C}^{L\times1},\quad
	\mathbf v\triangleq\bm\phi^*.
\end{equation}
Since $\mathbf v^H=\bm\phi^T$, it holds that
\begin{equation}
	g_{k,i}=\bm\phi^T\mathbf q_{k,i}=\mathbf v^H\mathbf q_{k,i}.
\end{equation}
The QoS residual for user $k$ w.r.t. $\mathbf v$ is expressed as
\begin{equation}\label{eq_phi_residual_new}
	r_k(\mathbf v)\triangleq
	\gamma_k\sum\nolimits_{i\in\mathcal K,i\ne k}\!\big|\mathbf v^H\mathbf q_{k,i}\big|^2
	\!-\big|\mathbf v^H\mathbf q_{k,k}\big|^2\!+
	\gamma_k M\sigma_k^2,
\end{equation}
where the three terms respectively represent the aggregate interference, the desired-signal power, and the noise-related component for user $k$. Thus, under fixed digital precoders and PS connections, decreasing $r_k(\mathbf v)$ effectively enlarges the QoS margin for user $k$. However, because the QoS tightness varies across users/directions, simply reducing these residuals uniformly is inappropriate. Instead, an aggregate weighted QoS residual is formulated by leveraging the dual weights $\{\eta_k\}_{k\in\mathcal{K}}$ acquired from the preceding digital-precoder update, yielding\looseness=-1
\begin{align}
	\!J_{\phi}(\mathbf v)\!\triangleq\!\sum\nolimits_{k\in\mathcal K}\hspace{-2pt}\eta_k r_k(\mathbf v)\!=\!\mathbf v^H\mathbf Q_{\phi}\mathbf v\!+\!
	\sum\nolimits_{k\in\mathcal K}\hspace{-2pt}\eta_k\gamma_kM\sigma_k^2,
	\label{eq_J_phi_new}
\end{align}
where $\mathbf Q_{\phi}\triangleq\sum\nolimits_{k\in\mathcal K}\eta_k\big(\gamma_k\sum\nolimits_{i\in\mathcal K,i\ne k}\!\mathbf q_{k,i}\mathbf q_{k,i}^H-\mathbf q_{k,k}\mathbf q_{k,k}^H\big)$.

The matrix $\mathbf Q_{\phi}$ is generally indefinite, which renders the following weighted QoS residual minimization problem non-convex and highly intractable:
\begin{subequations}
	\begin{alignat}{2}
		\mathrm{(P3.2)}:\quad&\underset{\mathbf{v}}{\min}\quad
		&&\mathbf{v}^H\mathbf{Q}_\phi\mathbf{v}\label{eq_P3_2_a}\\[-4pt]
		&~\mathrm{s.t.}
		&&|v_l|=1,~l\in\mathcal{L},\label{eq_P3_2_b}
	\end{alignat}
\end{subequations}
where $\mathbf{v}\triangleq[v_1,\ldots,v_L]^T$ represents the conjugate unit-modulus phase-shift vector. To tackle problem (P3.2), a surrogate function is constructed based on the MM algorithmic framework \cite{SunY_surrogate_function}. Choose $\tau_{\phi}\ge\lambda_{\max}(\mathbf Q_{\phi})$, such that $\mathbf{Z}_\phi\triangleq\tau_{\phi}\mathbf I^L-\mathbf Q_{\phi}\succeq\mathbf 0$. To avoid computationally intensive eigenvalue decomposition, a low-complexity choice for $\tau_{\phi}$ is the Gershgorin bound \cite{Horn_matrix_2012}:
\begin{equation}
	\tau_{\phi}=\max_{l\in\mathcal{L}}\,\sum\nolimits_{l'\in\mathcal{L}}\big|\mathbf Q_{\phi}[l,l']\big|.
	\label{eq_tau_phi_new}
\end{equation}
For any feasible $\mathbf{v}$ and given the local point $\mathbf{v}^{[j]}$, it holds that
\begin{equation}
	\big(\mathbf{v}-\mathbf{v}^{[j]}\big)^H\mathbf{Z}_\phi\big(\mathbf{v}-\mathbf{v}^{[j]}\big)\ge0.
\end{equation}
Due to the unit-modulus constraints $\eqref{eq_P3_2_b}$, we have $\mathbf{v}^H\mathbf{v}=L$. Hence, the objective function in (P3.2) can be majorized as
\begin{align}
	\mathbf{v}^H\mathbf{Q}_\phi\mathbf{v}&=\tau_{\phi}L-\mathbf{v}^H\mathbf{Z}_\phi\mathbf{v}\\[-3pt]
	&\le\tau_{\phi}L-2\Re\big\{\mathbf{v}^H\mathbf{Z}_\phi\mathbf{v}^{[j]}\big\}+\big(\mathbf{v}^{[j]}\big)^H\mathbf{Z}_\phi\mathbf{v}^{[j]}.
\end{align}
This upper bound is globally tight at $\mathbf{v}=\mathbf{v}^{[j]}$. Minimizing this surrogate function subject to \eqref{eq_P3_2_b} is equivalent to maximizing $\Re\big\{\mathbf{v}^H\mathbf{Z}_\phi\mathbf{v}^{[j]}\big\}$, which yields the following closed-form optimal solution:
\vspace{-3pt}
\begin{equation}\label{eq_v_candidate_new}
	\mathbf v=
	\exp\big(\,j\arg\big(\big(\tau_{\phi}\mathbf I^L-\mathbf Q_{\phi}\big)\mathbf v^{[j]}\big)\big),
\end{equation}
where the exponential and argument operations are applied element-wise. The corresponding optimal phase-shift vector is then recovered as $\bm\phi=\mathbf v^*$. Upon updating the PS phase shifts $\bm\phi$, the digital precoders are re-optimized according to subsection \ref{subsec_42}. This phase-shift update is accepted only if it leads to a reduction in the actual transmit power.


\vspace{-4pt}
\subsection{PS Connection and Assignment Update}\label{subsec_dynamic_connection_new}
For fixed $\{\{\mathbf{d}_k\}_{k\in\mathcal{K}},\bm\phi\}$, the transmit-power objective does not depend on the binary variables $\big\{\mathbf{\hat{C}},\mathbf{S}\big\}$. Therefore, the PS connection and assignment matrices can also be updated by reducing the dual-weighted QoS residual, followed by re-optimizing the digital precoders and PS phase shifts to evaluate the actual transmit power. Let $\mathbf{\hat{c}}\triangleq\operatorname{vec}\big(\mathbf {\hat{C}}\big)\in\{0,1\}^{LN\times1}$.
For each user-stream pair $(k,i)$, define
\begin{equation}\label{eq_bki_def_new}
	\mathbf z_{k,i}\triangleq\mathbf{\hat{H}}_k\mathbf d_i,\quad
	\mathbf b_{k,i}\triangleq\operatorname{vec}\big(\bm\phi\mathbf z_{k,i}^{T}\big),~k,i\in\mathcal{K}.
\end{equation}
Then, the equivalent signal coefficient can be rewritten as
\begin{equation}\label{eq_g_b_c_new}
	g_{k,i}=
	\bm\phi^T\mathbf{\hat{C}}\mathbf{z}_{k,i}=
	\mathbf b_{k,i}^T\mathbf{\hat{c}},~k,i\in\mathcal{K}.
\end{equation}
The QoS constraint for user $k\in\mathcal{K}$ in \eqref{eq_P3_new_b} is equivalent to
\vspace{0pt}
\begin{equation}\label{eq_rk_c_expanded}
	r_k(\mathbf{\hat{c}})\triangleq
	\gamma_k\sum\nolimits_{i\in\mathcal K,i\ne k}\!\big|\mathbf b_{k,i}^T\mathbf{\hat{c}}\big|^2\!-
	\big|\mathbf b_{k,k}^T\mathbf{\hat{c}}\big|^2\!+
	\gamma_kM\sigma_k^2\le0,
\end{equation}
where $r_k(\mathbf{\hat{c}})$ is the QoS residual for user $k\in\mathcal{K}$. Since $\mathbf{\hat{c}}$ is real-valued and binary, we have
\begin{equation}
	\big|\mathbf b_{k,i}^T\mathbf{\hat{c}}\big|^2=
	\mathbf{\hat{c}}^T\Re\big\{\mathbf b_{k,i}^{*}\mathbf b_{k,i}^{T}\big\}\mathbf{\hat{c}},~k,i\in\mathcal{K}.
\end{equation}
Accordingly, \eqref{eq_rk_c_expanded} can be expressed as
\begin{equation}\label{eq_Qk_single_def_new}
	r_k(\mathbf{\hat{c}})=
	\mathbf{\hat{c}}^T\mathbf Q_k\mathbf{\hat{c}}+\gamma_kM\sigma_k^2,~k\in\mathcal{K},
\end{equation}
where $\mathbf Q_k\triangleq\gamma_k\sum\nolimits_{i\in\mathcal K,i\ne k}\Re\big\{\mathbf{b}_{k,i}^*\mathbf{b}_{k,i}^T\big\}-\Re\big\{\mathbf{b}_{k,k}^*\mathbf{b}_{k,k}^T\big\}$. By exploiting the QoS dual weights $\{\eta_k\}_{k\in\mathcal{K}}$ derived from the previous digital-precoder update, we form a dual-weighted aggregate residual to emphasize users whose QoS constraints are more critical to the transmit power, expressed as
\begin{equation}\label{eq_JC_new}
	\!\!\!J_C(\mathbf{\hat{c}})\triangleq
	\sum\nolimits_{k\in\mathcal K}\!\eta_k r_k(\mathbf{\hat{c}})=
	\mathbf{\hat{c}}^T\mathbf Q_C\mathbf{\hat{c}}+
	\sum\nolimits_{k\in\mathcal K}\!\eta_k\gamma_kM\sigma_k^2,
\end{equation}
where $\mathbf Q_C\triangleq\sum\nolimits_{k\in\mathcal K}\eta_k\mathbf Q_k$. The matrices $\mathbf Q_k$ and $\mathbf Q_C$ are indefinite because the desired-signal terms enter the QoS residuals with negative signs. Hence, directly minimizing \eqref{eq_JC_new} over binary variables $\big\{\mathbf{\hat{C}},\mathbf{S}\big\}$ results in a non-convex mixed-integer quadratic program. To obtain tractable PS connection and assignment updates, we next construct linear MM upper bounds for the individual QoS residuals and their weighted aggregate.\looseness=-1

\subsubsection{Linear MM Surrogate}
For each user $k$, choose $\beta_k\ge\lambda_{\max}(\mathbf{Q}_k)$ such that $\beta_k\mathbf I^{LN}-\mathbf Q_k\succeq\mathbf 0$. A closed-form choice is\looseness=-1
\begin{equation}\label{eq_beta_k_closed_form}
	\beta_k=
	\gamma_kL\sum\nolimits_{i\in\mathcal K,i\ne k}\big\|\mathbf{\hat{H}}_k\mathbf d_i\big\|_2^2,~k\in\mathcal{K},
\end{equation}
whose validity is proved in Appendix \ref{appendix_D}. \eqref{eq_beta_k_closed_form} requires neither explicit construction of the $LN\times LN$ matrix $\mathbf Q_k$ nor eigenvalue decomposition. At the current iteration point $\mathbf c^{[t]}$, define
\begin{align}
	&\bm\xi_k\triangleq
	2\mathbf Q_k\mathbf{\hat{c}}^{[t]}+
	\beta_k\big(\mathbf 1-2\mathbf{\hat{c}}^{[t]}\big),~k\in\mathcal{K},\label{eq_xi_k_MM}\\
	&\kappa_k\triangleq
	\big(\mathbf{\hat{c}}^{[t]}\big)^T\big(\beta_k\mathbf I^{LN}-\mathbf Q_k\big)\mathbf{\hat{c}}^{[t]}+
	\gamma_kM\sigma_k^2,~k\in\mathcal K.\label{eq_kappa_k_MM}
\end{align}
Then, a linear MM upper bound for the $k$-th QoS residual is
\begin{equation}\label{eq_residual_surrogate_new}
	r_k(\mathbf{\hat{c}})\le\hat r_k\big(\mathbf{\hat{c}}\mid\mathbf{\hat{c}}^{[t]}\big)\triangleq
	\bm\xi_k^T\mathbf{\hat{c}}+\kappa_k,~k\in\mathcal{K},
\end{equation}
where equality holds at $\mathbf{\hat{c}}=\mathbf{\hat{c}}^{[t]}$. By taking the dual-weighted sum of the individual upper bounds in \eqref{eq_residual_surrogate_new}, define
\begin{align}
	&\beta_C\triangleq\sum\nolimits_{k\in\mathcal{K}}\eta_k\beta_k,~\\
	&\bm\xi_C\triangleq\sum\nolimits_{k\in\mathcal K}\eta_k\bm\xi_k=2\mathbf Q_C\mathbf{\hat{c}}^{[t]}+\beta_C\big(\mathbf 1^{LN}\!\!-2\mathbf{\hat{c}}^{[t]}\big),\label{eq_xi_C_MM}\\
	&\kappa_C\!\triangleq\!\sum_{k\in\mathcal K}\!\eta_k\kappa_k\!=\!\big(\mathbf{\hat{c}}^{[t]}\big)\!^T\!\big(\beta_C\mathbf I^{LN}\!\!\!-\!\mathbf Q_C\big)\mathbf{\hat{c}}^{[t]}\!+\!\!\sum_{k\in\mathcal K}\!\eta_k\gamma_kn_k.
\end{align}
The dual-weighted aggregate linear MM surrogate is therefore
\begin{equation}\label{eq_JC_surrogate_new}
	J_C(\mathbf{\hat{c}})\le\hat J_C\big(\mathbf{\hat{c}}\mid\mathbf{\hat{c}}^{[t]}\big)\triangleq
	\bm\xi_C^T\,\mathbf{\hat{c}}+\kappa_C,
\end{equation}
where equality holds at $\mathbf{\hat{c}}=\mathbf{\hat{c}}^{[t]}$. Furthermore, since $\eta_k\ge0$ and $\beta_k\mathbf I^{LN}-\mathbf Q_k\succeq\mathbf 0$, $\forall\,k\in\mathcal K$, it holds that
\begin{equation}\label{eq_beta_C_psd_MM}
	\beta_C\mathbf I^{LN}-\mathbf Q_C=
	\sum\nolimits_{k\in\mathcal K}\eta_k\big(\beta_k\mathbf I^{LN}-\mathbf Q_k\big)\succeq\mathbf 0.
\end{equation}
The derivations of linear MM surrogates $\hat r_k\big(\mathbf{\hat{c}}\!\mid\!\mathbf{\hat{c}}^{[t]}\big)$ and $\hat J_C\big(\mathbf{\hat{c}}\mid\mathbf{\hat{c}}^{[t]}\big)$ are provided in Appendix \ref{appendix_E}. In implementation, neither $\mathbf Q_k$ nor $\mathbf Q_C$ needs to be explicitly stored. Specifically, the term $\mathbf Q_k\mathbf{\hat{c}}^{[t]}$ in \eqref{eq_xi_k_MM} and \eqref{eq_xi_C_MM} can be computed as
\begin{align}\label{eq_matrix_free_gradient_MM}
	\mathbf Q_k\mathbf c^{[t]}=
	\gamma_k\sum\nolimits_{i\in\mathcal K, i\ne k}\!\Re\big\{\mathbf b_{k,i}^{*}g_{k,i}^{[t]}\big\}-
	\Re\big\{\mathbf b_{k,k}^{*}g_{k,k}^{[t]}\big\},
\end{align}
where $\!g_{k,i}^{[t]}\!\!=\!\mathbf b_{k,i}^T\mathbf{\hat{c}}^{[t]}\!$ is the current equivalent signal coefficients.

\subsubsection{Joint PS Connection and Assignment Update}
The binary variables $\big\{\mathbf{\hat{C}},\mathbf{S}\big\}$ are coupled: the connection matrix $\mathbf{\hat{C}}$ determines which PS connects to each antenna, while the assignment matrix $\mathbf{S}$ determines the RF-chain to which each PS belongs. Thus, rather than updating them separately, we jointly optimize $\big\{\mathbf{\hat{C}},\mathbf{S}\big\}$ under the dual-weighted aggregate linear MM surrogate $\hat J_C\big(\mathbf{\hat{c}}\mid\mathbf{\hat{c}}^{[t]}\big)$ in \eqref{eq_JC_surrogate_new}. Since $\kappa_C$ is independent of $\big\{\mathbf{\hat{C}},\mathbf{S}\big\}$, it can be omitted from the objective without compromising optimality. First, define the Hamming distance of $\mathbf{S}$ as\looseness=-1
\begin{equation}\label{eq_S_distance_new}
	d_S\triangleq\!\sum_{r\in\mathcal R}\sum_{l\in\mathcal L}\big[\mathbf S^{[t]}[r,l](1\!-\!\mathbf S[r,l])\!+\!\big(1\!-\!\mathbf S^{[t]}[r,l]\big)\mathbf S[r,l]\big].
\end{equation}
The joint PS connection and assignment optimization is
\begin{subequations}
	\begin{alignat}{2}
		\mathrm{(P3.3)}:~&\underset{\mathbf{\hat{C}},\mathbf S,\{s_k\}_{k\in\mathcal{K}}}{\min}&&\sum_{l\in\mathcal L}\!\sum_{n\in\mathcal N}\!\bm\xi_C[l,n]\mathbf{\hat{C}}[l,n]\!+\!\rho\!\sum_{k\in\mathcal K}\!s_k\!\label{eq_S_profile_MILP_new_a}\\
		&\hspace{-10pt}\mathrm{s.t.}&&\hspace{-31pt}\Delta_{\min}\le d_S\le\Delta_{\max},\label{eq_S_profile_MILP_new_b}\\
		&&&\hspace{-32pt}\sum_{l\in\mathcal L}\!\sum_{n\in\mathcal N}\!\bm\xi_k[l,n]\mathbf{\hat{C}}[l,n]\!+\!\kappa_k\le s_k,~k\in\mathcal K,\label{eq_S_profile_MILP_new_c}\\[-1pt]
		&&&\hspace{-31pt}s_k\ge0,~k\in\mathcal{K},\label{eq_S_profile_MILP_new_d}\\[-1pt]
		&&&\hspace{-31pt}\eqref{eq_P1_d},\eqref{eq_P3_new_c}-\eqref{eq_P3_new_g}.\nonumber
	\end{alignat}
\end{subequations}
The nonnegative slack variables $\{s_k\}_{k\in\mathcal K}$ in constraint \eqref{eq_S_profile_MILP_new_c} enable the exploration of PS connections whose individual QoS residual upper bounds $\hat{r}_k(\mathbf{\hat{c}}\mid\mathbf{\hat{c}}^{[t]})$ may become temporarily positive, while the penalty factor $\rho>0$ discourages such individual residual deterioration when reducing the overall dual-weighted aggregate surrogate $J_C\big(\mathbf{\hat{c}}\!\mid\!\mathbf{\hat{c}}^{[t]}\big)$. Moreover, $[\Delta_{\min},\Delta_{\max}]$ specifies a Hamming-distance range, which not only avoids an excessively large search space in each iteration but also prevents the PS assignment matrix $\mathbf{S}$ from getting stuck at local points. Noting that one reassigned PS changes two binary entries in $\mathbf S$, a Hamming distance of $2q$ corresponds to reassigning $q$ PSs among the RF-chains. Problem (P3.3) is an MILP and can be solved efficiently using CVX with the MOSEK solver. The optimized $\big\{\mathbf{\hat{C}},\mathbf{S}\big\}$ are subsequently evaluated by re-optimizing the continuous beamforming variables $\{\{\mathbf{d}_k\}_{k\in\mathcal{K}},\bm{\phi}\}$ based on subsections \ref{subsec_42} and \ref{subsec_43}, and are accepted only if the actual transmit power is reduced.

\begin{algorithm}[t]
	\renewcommand{\algorithmicrequire}{\textbf{Input:}}
	\renewcommand{\algorithmicensure}{\textbf{Output:}}
	\caption{\hspace{2pt}QoS-MM Algorithm for Solving Problem (P3)}
	\label{alg_profiledS_qosmm_new}
	\begin{algorithmic}[1]
			\REQUIRE Initial points $\big\{\mathbf{\hat{C}}^{[0]},\mathbf{S}^{[0]},\bm{\phi}^{[0]},\bm{\chi}^{[0]}\big\}$ and convergence tolerance $\epsilon$.
			\ENSURE The optimized solution $\big\{\{\mathbf{d}_k\}_{k\in\mathcal{K}}^\star,\mathbf{\hat{C}}^\star,\mathbf S^\star,\bm\phi^\star\big\}$.
			\STATE Set iteration index $x=0$.
			\REPEAT
			\STATE Update $\{\mathbf{d}_k^{[x+1]}\}_{k\in\mathcal{K}}$ by solving (P3.1) using closed-form solution \eqref{eq_d_final_new}, and compute $\big\{\eta_k^{[x+1]}\big\}_{k\in\mathcal{K}}$ by \eqref{eq_eta_from_chi_new}.\looseness=-1
			\STATE Update $\bm{\phi}=\mathbf{v}^*$ by solving (P3.2) using closed-form solution \eqref{eq_v_candidate_new}; re-optimize $\{\mathbf{d}_k^{[x+1]}\}_{k\in\mathcal{K}}$ by step 3 and accept $\bm{\phi}^{[x+1]}$ only if the actual transmit power decreases.
			\STATE Update $\{\mathbf{\hat{C}},\mathbf{S}\}$ by solving (P3.3); re-optimize $\big\{\bm\phi^{[x+1]},\big.$ $\big.\{\mathbf{d}_k^{[x+1]}\}_{k\in\mathcal{K}}\big\}$ by steps 4 and 3, and accept $\big\{\mathbf{\hat{C}}^{[x+1]},\big.$ $\big.\mathbf{S}^{[x+1]}\big\}$ only if the actual transmit power decreases.
			\STATE Set $x=x+1$.
			\UNTIL The fractional decrease in the actual transmit power is below $\epsilon$.
	\end{algorithmic}
\end{algorithm}
\setlength{\textfloatsep}{5pt}

\vspace{-9pt}
\subsection{Overall Algorithm and Convergence}
The proposed QoS-MM algorithm is summarized in \textit{Algorithm \ref{alg_profiledS_qosmm_new}}. After each update, the newly optimized PS phase shifts and connections are verified by re-optimizing the continuous beamforming variables under the QoS requirements, and are accepted only if the resulting transmit power decreases. Consequently, the accepted transmit-power sequence is monotonically nonincreasing and lower bounded by zero, thereby guaranteeing convergence of the objective sequence.

\vspace{-5pt}
\section{Numerical Results}\label{sec5}

\begin{table}[t]
	\centering
	\vspace{-5pt}
	\caption{Summary of Algorithm Complexity}
	\label{tab_complexity}
	\scriptsize
	\renewcommand{\arraystretch}{1}
	\vspace{-5pt}
	\begin{tabular}{
			@{}
			>{\centering\arraybackslash}m{1.1cm}
			@{\hspace{6pt}}
			>{\raggedright\arraybackslash}m{1.75cm}
			@{\hspace{6pt}}
			>{\raggedright\arraybackslash}m{2.9cm}
			@{\hspace{6pt}}
			>{\raggedright\arraybackslash}m{2.41cm}
			@{}
		}
		\toprule
		\textbf{Algorithm}
		& \textbf{Binary variables}
		& \textbf{Computational bottleneck}
		& \textbf{Iterations \& runtime} \\
		\midrule
		MICP
		& $NL$ in $\bar{\mathbf C}$
		& MICP in (P2.3); complexity is dominant by \cite{Polik_interior_point_book,WuYF_RIS_GBD}: $I_1\mathcal{O}\big(I_\mathrm{B}\!\log\!\frac{1}{\iota}(NL\!+\!K)^3\big)$.
		& $I_1=3.4~\mathrm{iterations}$; $0.36~\mathrm{s/iteration}$. \\
		\midrule
		QoS-MM
		& $NL$ in $\hat{\mathbf C}$ and $RL$ in $\mathbf S$
		& MILP in (P3.3); complexity is dominant by \cite{WuYF_RIS_GBD}: $I_2\mathcal{O}\big(I_\mathrm{B}(NL)^2\big)$.
		& $I_2=38.5~\mathrm{iterations}$; $10.82~\mathrm{s/iteration}$. \\
		\bottomrule
	\end{tabular}
\end{table}

This section presents numerical results to comprehensively evaluate the proposed sparse-PS static-connection architecture and the corresponding optimization algorithms under two representative communication scenarios: i) a single-RF-chain system and ii) a multi-RF-chain system. Unless otherwise specified, the LoS channel model in \eqref{eq_channel} is adopted. The AoDs $\theta_k$, $k\in\mathcal{K}$, of the candidate service users/directions are randomly generated within $[-\pi/2,\pi/2]$ over $100$ channel realizations. The BS-to-user distances, required SNR/SINR thresholds, noise powers, and channel power gains are set uniformly for all users as $d_k=d=50$~m, $\gamma_k=\gamma=10$~dB~$=10$, $\sigma_k^2=\sigma^2=-80$~dBm~$=10^{-11}$~W, and $\alpha_k=\beta_0 d^{-\xi_0}$, respectively, where $\beta_0=-62$~dB~$=10^{-6.2}$, denotes the reference channel power gain at $1$~m, and $\xi_0=2.1$ is the path-loss exponent \cite{ZhangYP_mmChannel,Rappaport_mmChannel}. The benchmark schemes are defined as follows:\looseness=-1

1) \textbf{Full-PS}: This benchmark considers the sub-connected architecture with the maximum possible PS count, i.e., $L=N$, where each antenna is configured with an independent PS. For each simulation scenario, the beamformers are designed with the correspondingly proposed algorithm tailored to that specific scenario, and this scenario-matched beamforming scheme is followed by all subsequent benchmarks. The full-PS architecture serves as an idealized reference for evaluating the proposed sparse-PS connection designs.

2) \textbf{Monte Carlo (MC) conn.}: The connection matrix is randomly generated under the PS connection constraints, and $r_\mathrm{a}$ independent random matrices are constructed. The best performance achieved among all these matrices is then reported.

3) \textbf{Cyclic conn.}: In this benchmark, antennas are connected to PSs in a cyclic interleaving manner. Specifically, the $n$-th ($n\in\mathcal{N}$) antenna is connected to PS~$(((n-1)\bmod L)+1)$. Next, by applying the same cyclic assignment pattern in each sub-array, this connection strategy is extended to the multi-RF-chain case with a total of $L$ PSs available.

4) \textbf{Adjacent conn.}: Adjacent antennas are connected to each PS sequentially in ascending index order. Let $X=\lfloor N/L\rfloor$ and $Y=N\!\!\mod L$. For $Y=0$, each PS is assigned $X$ consecutive antennas in sequence. For $Y>0$, the first $Y$ PSs take $(X+1)$ antennas each, and the rest take $X$ antennas each. This strategy is readily extended to the multi-RF-chain case, where the total number of available PSs is $L$.\looseness=-1

For reproducibility, the PS connection is initialized with a randomly generated binary matrix satisfying the hardware connection constraints. If needed, the initial PS assignment matrix $\mathbf{S}^{[0]}$ is constructed accordingly; the PS phase shifts $\bm{\phi}^{[0]}$ are independently initialized on the unit circle, and the initial virtual uplink powers are set to $\bm{\chi}^{[0]}=\mathbf{0}$. The convergence tolerance $\epsilon$, the uplink--downlink duality fixed-point tolerance, and the MOSEK solver tolerance are set to $10^{-4}$, $10^{-5}$, and the default, respectively. For each channel realization, the PS connection is optimized offline once for the corresponding candidate user/direction set, and then kept fixed during transmissions. Furthermore, the algorithm complexities are summarized in Table~\ref{tab_complexity}, where the simulation parameters are configured as $N=16$, $L=8$, $K=4$ for the MICP algorithm and $N=64$, $L=32$, $R=8$, $K=6$ for the QoS-MM algorithm. In the table, $\mathcal{O}(\cdot)$ denotes the Big-O complexity; $I_1$ and $I_2$ represent the average iterations of the MICP and QoS-MM algorithms, respectively; $I_\mathrm{B}$ is the number of branch-and-bound iterations; and $\iota>0$ is the convergence tolerance of the interior-point method. The algorithms are implemented in MATLAB R2022b and executed on a desktop computer with an Intel Core i9-14900KS CPU, utilizing MOSEK as the CVX solver.\looseness=-1

\vspace{-2pt}
\subsection{Single-RF-Chain System}
We first consider a single RF-chain system with $N=16$ antennas, where a static PS-to-antenna connection matrix is optimized to support beam switching among $K$ candidate users/directions via time-division multiplexing. The results obtained by \textit{Algorithm \ref{alg1}} proposed in Section \ref{subsec3_2} are referred to as \textit{Proposed MICP}, while the number of random connection matrices for \textit{MC conn.} is set to $r_\mathrm{a}=20$. 

\begin{figure}[t]
	\centering
	\vspace{-5pt}
	\includegraphics[width=3.4in]{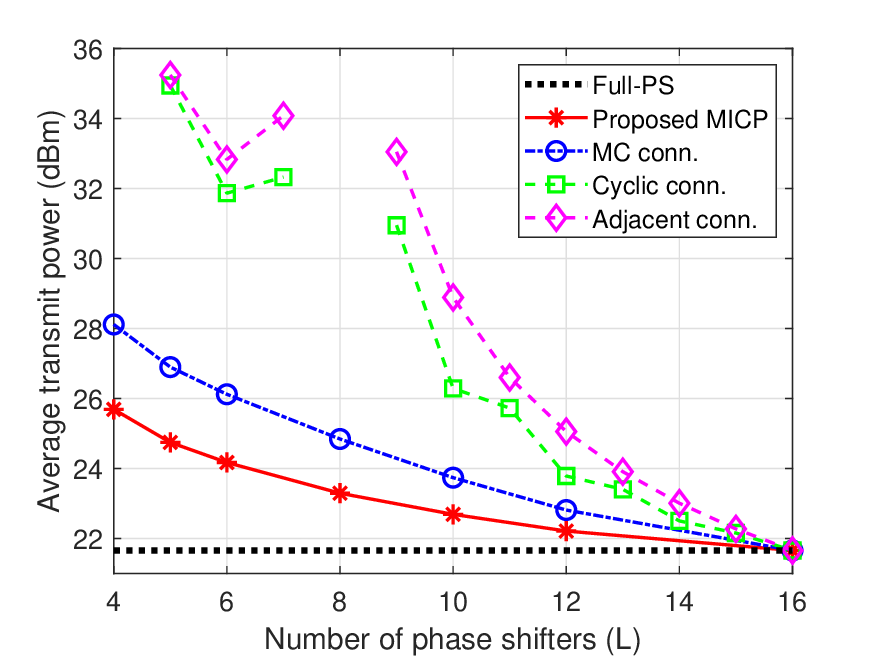}
	\vspace{-5pt}
	\caption{Average transmit power versus the number of phase shifters in the single-RF-chain system.}
	\label{Fig3}
	\includegraphics[width=3.4in]{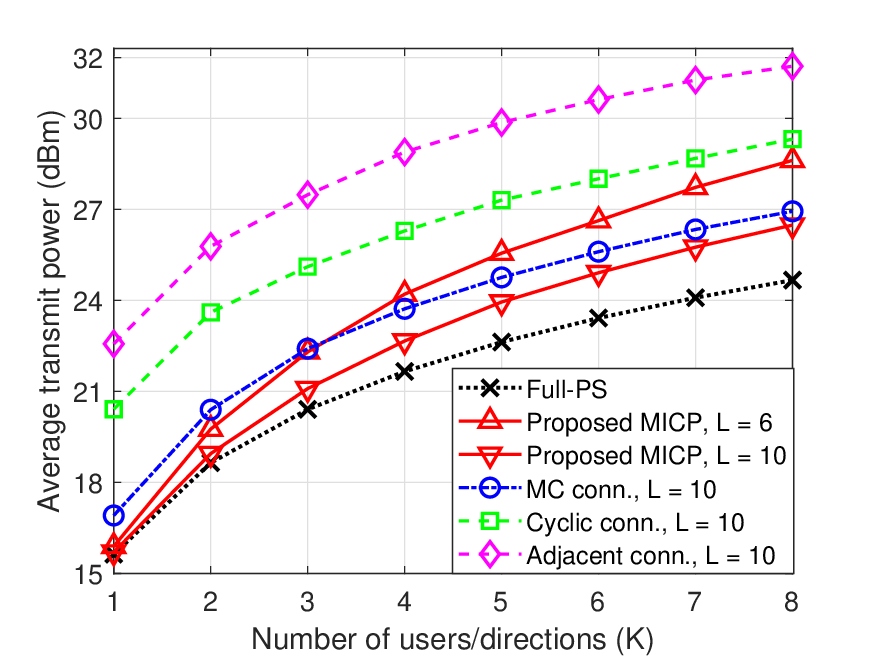}
	\vspace{-5pt}
	\caption{Average transmit power versus the number of users/directions in the single-RF-chain system.}
	\label{Fig4}
\end{figure}

Fig.~\ref{Fig3} illustrates the average transmit power versus the number of PSs $L$ for $K=4$. As observed, \textit{Proposed MICP} significantly outperforms all sparse-PS benchmarks and approaches the \textit{Full-PS} reference benchmark as $L$ increases. For example, at $L=10$, it reduces the PS count by $37.5\%$ compared with the full-PS architecture, while only increasing transmit power from about $21.7$~dBm to $22.7$~dBm. In contrast, \textit{Cyclic/Adjacent conn.} schemes exhibit evident performance fluctuations, and their results at $L\in\{4,8\}$ are not plotted because the required powers exceed $40$~dBm. This degradation stems from their deterministic PS connections: antennas sharing one PS may have steering phases that nearly cancel for some directions, producing deep nulls in the beampattern. Once a candidate AoD falls close to such a null, the array gain collapses, and a large transmit power is required to satisfy the QoS constraint, the underlying mechanism of which is further discussed in the analysis of Fig~\ref{Fig6}. The \textit{Proposed MICP} avoids this problem by optimizing PS connections according to the candidate users/directions, thereby preventing unfavorable antenna groupings.\looseness=-1

Fig.~\ref{Fig4} depicts the average transmit power versus the number of candidate users/directions $K$. Here, \textit{Proposed MICP} is evaluated with $L\in\{6,10\}$, while the other sparse-PS benchmarks adopt $L=10$. As expected, the transmit power consistently increases with $K$ for all schemes, since supporting more candidate users under a shared static connection matrix inevitably compromises the array gain for individual users, rendering the required power increasingly dominated by the worst-case direction. Among the schemes with $L=10$, \textit{Proposed MICP} achieves the lowest transmit power, closely approaching the \textit{Full-PS} reference benchmark. Remarkably, even with only $L=6$, the proposed design still outperforms the deterministic \textit{Cyclic/Adjacent conn.} schemes employing $L=10$ PSs, and surpasses $\textit{MC conn.}$ with $L=10$ when $K\le3$. This indicates that an optimized sparse-PS connection yields superior beamforming performance than simply deploying more PSs in a deterministic or random manner, highlighting the importance of connection optimization in sparse-PS architectures.\looseness=-1

\vspace{-3pt}
\subsection{Multi-RF-Chain System} Next, we study the general multi-RF-chain multiuser scenario, where the BS is equipped with $R=8$ RF-chains and serves $K$ candidate users/directions simultaneously. The ULA comprises $N=64$ antennas and is partitioned into $R$ sub-arrays, each consisting of $M=N/R=8$ antennas. The results obtained by \textit{Algorithm \ref{alg_profiledS_qosmm_new}} proposed in Section \ref{sec4} are designated as \textit{Proposed QoS-MM}, while the number of random connection matrices for \textit{MC conn.} and the penalty factor for problem (P3.3) are set to $r_\mathrm{a}=100$ and $\rho=10^4$, respectively.

\begin{figure}[t]
	\centering
	\vspace{-5pt}
	\includegraphics[width=3.4in]{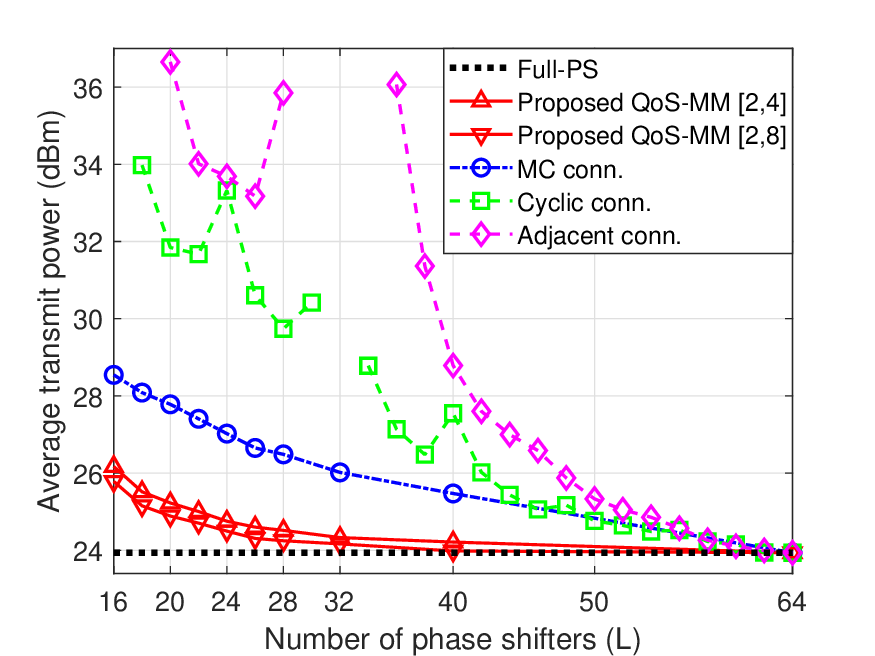}
	\vspace{-5pt}
	\caption{Average transmit power versus the number of phase shifters in the multi-RF-chain multiuser scenario.}
	\label{Fig6}
	\includegraphics[width=3.4in]{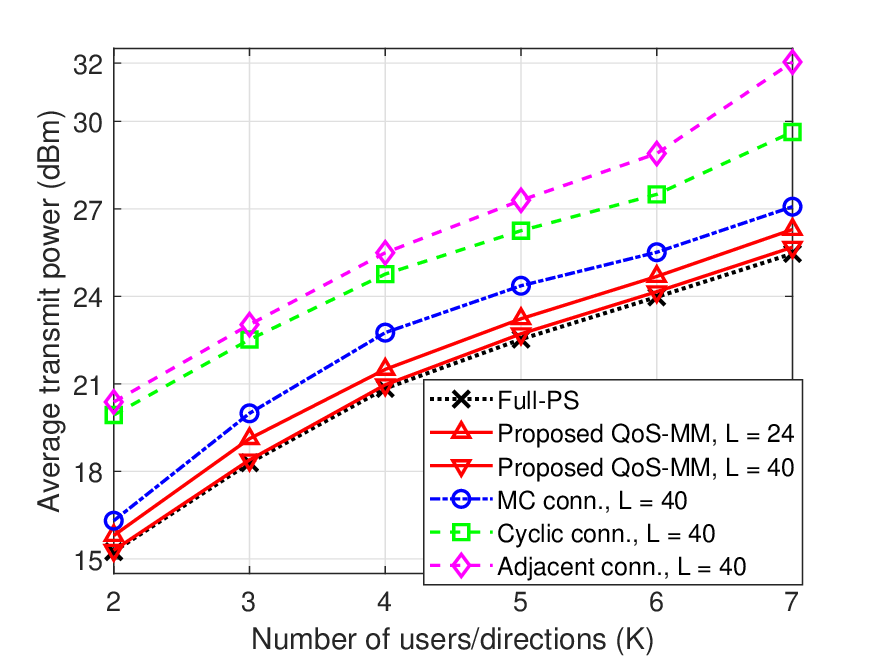}
	\vspace{-5pt}
	\caption{Average transmit power versus the number of users/directions in the multi-RF-chain system.}
	\label{Fig7}
\end{figure}

Fig.~\ref{Fig6} illustrates the average transmit power versus the total number of PSs $L$ for $K=6$, where $[2,4]$ and $[2,8]$ in \textit{Proposed QoS-MM} denote the Hamming-distance range $[\Delta_{\min},\Delta_{\max}]$ used in problem (P3.3). The proposed designs achieve significant power savings over all sparse-PS benchmarks and approach the \textit{Full-PS} reference benchmark as $L$ grows. Specifically, \textit{Proposed QoS-MM} with range $[2,8]$ requires $24.5$~dBm transmit power at $L=24$, which is about $2.5$~dB lower than that of \textit{MC conn.} and only around $0.5$~dB higher than the \textit{Full-PS} reference benchmark, while dramatically reducing the PS count by $62.5\%$. Moreover, the virtually identical curves for the $[2,4]$ and $[2,8]$ configurations demonstrate that exploring a tight Hamming-distance neighborhood is sufficient to identify near-optimal PS connections. This indicates that, even in complex scenarios with multi-stream transmissions and multiuser interference, a meticulously optimized sparse-PS connection can preserve most of the beamforming and interference-management capabilities of the full-PS architecture with substantially fewer PSs. Similar to Fig.~\ref{Fig3}, the deterministic \textit{Cyclic/Adjacent conn.} schemes exhibit severe performance fluctuations, and several points in the vicinity of $L$ for which $N/L$ is an integer are omitted because the required powers drastically exceed $40$~dBm. For \textit{Adjacent conn.}, each PS controls a contiguous group of $N/L$ antennas, whose beampattern contains deep direction-dependent nulls; for \textit{Cyclic conn.}, $N/L$ periodically spaced antennas connected to the same PS form sparse sub-arrays with pronounced grating nulls. Indeed, such beampattern defects may either align nulls with desired signal directions or create unsuppressed grating lobes toward interfering users, forcing the system to compensate for the resulting loss via elevated transmit power. In contrast, \textit{Proposed QoS-MM} mitigates this issue by jointly optimizing the PS connection and digital/analog beamformers, thereby avoiding rigid connection patterns that severely restrict the spatial DoFs.\looseness=-1


In Fig.~\ref{Fig7}, we further investigate the relationship between the average transmit power and the number of candidate users/directions $K$, with the Hamming-distance range of \textit{Proposed QoS-MM} set to $[2,6]$. As observed, while the transmit power naturally increases with $K$ across all schemes, the curve of \textit{Proposed QoS-MM} with $L=40$ almost overlaps with the \textit{Full-PS} reference benchmark for all tested values of $K$. This implies that, under the proposed design, employing more PSs beyond $L=40$ provides only marginal performance gains, regardless of the user scale. Furthermore, even with a reduced total PS budget of $L=24$, \textit{Proposed QoS-MM} consistently outperforms \textit{MC conn.} with $L=40$ over the entire range of $K$. Such a comparison highlights that an optimized connection topology using fewer PSs can be more effective than random connections with more PSs. These findings collectively confirm that the proposed sparse-PS connection design maintains excellent beamforming performance as the user scale expands, while achieving substantial PS hardware cost reductions.\looseness=-1

\section{Conclusion}\label{sec6}
This paper proposed a novel static-connection hybrid beamforming architecture with a reduced number of PSs to satisfy the stringent practical demand for ultra-low hardware cost and implementation complexity. By progressing from a single-RF-chain setup to a general multi-RF-chain system, this study provided novel analytical insights and pragmatic optimization algorithms--based on mixed-integer convex program and QoS-majorization-minimization, respectively--for the proposed sparse-PS static-connection architecture. In the single-RF-chain scenario, we revealed that preserving array gain depends not merely on the PS count, but also critically on grouping antennas with compatible steering phases. In the multi-RF-chain scenario, the proposed QoS-MM framework showed that jointly optimizing the discrete PS connections and continuous beamformers effectively recovers spatial DoFs despite a limited hardware budget. Numerical results affirmed that the proposed designs substantially reduce the PS count while maintaining near-full-PS performance and consistently outperforming deterministic or random connection schemes. Crucially, this work establishes sparse-PS hybrid beamforming fundamentally as a topology optimization problem rather than a simple hardware reduction attempt: with carefully optimized static connections, fewer PSs can deliver strong beamforming and interference-management capabilities, offering an attractive hardware-performance trade-off for cost-sensitive large-scale antenna systems.\looseness=-1

\appendices
\section{Proof of Lemma \ref{Lemma1}}\label{appendix_A}
We prove \textit{Lemma \ref{Lemma1}} by verifying $\sum\nolimits_{\bar{l}\in\bar{\mathcal{L}}}\left(\left\lfloor\frac{N-\bar{l}}{\bar{L}-1}\right\rfloor+1\right)=N$. Specifically, the above equation is equivalent to proving
\vspace{-0pt}
\begin{equation}\label{eq_lemma1_proof1}
	\sum\nolimits_{\bar{l}=1}^{\bar{L}-1}\left\lfloor\frac{N-\bar{l}}{\bar{L}-1}\right\rfloor=N-\bar{L}+1.
\end{equation}
To prove \eqref{eq_lemma1_proof1}, let $\bar{\kappa},\bar{r}\in\mathbb{N}$ satisfy
\begin{equation}\label{eq_lemma1_proof2}
	N=\big(\bar{L}-1\big)\bar{\kappa}+\bar{r},~0\le{\bar{r}}<\bar{L}-1,
\end{equation}
i.e., $\bar{\kappa}=\big\lfloor N/(\bar{L}-1)\big\rfloor$ and $\bar{r}$ is the remainder. If $1\le\bar{l}\le{\bar{r}}$, then\looseness=-1
\begin{align}
	&N-\bar{l}=\big(\bar{L}-1\big)\bar{\kappa}+\bar{r}-\bar{l},~0\le{\bar{r}}-\bar{l}<\bar{L}-1,\nonumber\\
	\Rightarrow~&\left\lfloor\frac{N-\bar{l}}{\bar{L}-1}\right\rfloor=\left\lfloor\bar{\kappa}+\frac{\bar{r}-\bar{l}}{\bar{L}-1}\right\rfloor=\bar{\kappa},
\end{align}
since $\big(\bar{r}-\bar{l}\big)/\big(\bar{L}-1\big)\in[0,1)$. If $\bar{r}<\bar{l}\le\bar{L}-1$, then
\begin{align}
	&N-\bar{l}=\big(\bar{L}-1\big)\bar{\kappa}+\bar{r}-\bar{l},~1-\bar{L}\le{\bar{r}}-\bar{l}<0,\nonumber\\
	\Rightarrow~&\left\lfloor\frac{N-\bar{l}}{\bar{L}-1}\right\rfloor=\left\lfloor\bar{\kappa}+\frac{\bar{r}-\bar{l}}{\bar{L}-1}\right\rfloor=\bar{\kappa}-1,
\end{align}
since $\big(\bar{r}-\bar{l}\big)/\big(\bar{L}-1\big)\in[-1,0)$. Therefore, the left-hand side of \eqref{eq_lemma1_proof1} can be rewritten as
\begin{align}
	&\sum\nolimits_{\bar{l}=1}^{\bar{L}-1}\left\lfloor\frac{N-\bar{l}}{\bar{L}-1}\right\rfloor=\sum\nolimits_{\bar{l}=1}^{\bar{r}}\left\lfloor\frac{N-\bar{l}}{\bar{L}-1}\right\rfloor+\sum\nolimits_{\bar{l}=\bar{r}+1}^{\bar{L}-1}\left\lfloor\frac{N-\bar{l}}{\bar{L}-1}\right\rfloor\nonumber\\
	&=\bar{r}\bar{\kappa}+\big(\bar{L}-1-\bar{r}\big)\big(\bar{\kappa}-1\big)\overset{\eqref{eq_lemma1_proof2}}{=}N-\bar{L}+1,
\end{align}
which completes the proof.

\section{Proof of Proposition \ref{Proposition1}}\label{appendix_B}
We define the maximum achievable array gain as
\begin{equation}
	G^\star(L)\triangleq\max_{\mathbf{\bar{C}},\,\bm{\bar{\phi}}}|\mathbf{a}(\theta)^H\mathbf{\bar{C}}\bm{\bar{\phi}}|.
\end{equation}
To prove the proposition, it suffices to demonstrate that $G^\star(L+1)\ge G^\star(L)$, for any $L<N$.
Let $\big\{\mathbf{\bar{C}}_\mathrm{opt},\bm{\bar{\phi}}_\mathrm{opt}\big\}$ be the optimal solution for the $L$-PS case, and let $\{\mathcal{N}_l\}_{l\in\mathcal{L}}$ denote the corresponding antenna groups. Since $L<N$ and each PS must be connected to at least one antenna, there exists at least one group containing more than one antenna. Select such a group and denote its index by $l_\mathrm{sele}$, i.e., $|\mathcal{N}_{l_\mathrm{sele}}|\ge 2$.	Choose one antenna $n_\mathrm{sele}\in\mathcal{N}_{l_\mathrm{sele}}$. Now construct a feasible solution $\big\{\mathbf{\bar{C}}',\bm{\bar{\phi}}'\big\}$ and the corresponding antenna groups $\{\mathcal{N}'_l\}_{l\in\mathcal{L}'}$ for the $(L+1)$-PS case as follows:
\begin{enumerate}
	\item Retain the first $L$ phase shifts the same as those in the $L$-PS optimum, i.e., $\bm{\bar{\phi}}'[l]=\bm{\bar{\phi}}_\mathrm{opt}[l]$, $l=1,\ldots,L$;
	\item Assign the newly added $(L+1)$-th PS the same phase shift as the $l_\mathrm{sele}$-th PS, i.e., $\bm{\bar{\phi}}'[L+1]=\bm{\bar{\phi}}_\mathrm{opt}[l_\mathrm{sele}]$;
	\item Keep all antenna groups unchanged except that the $n_\mathrm{sele}$-th antenna is removed from $\mathcal{N}_{l_\mathrm{sele}}$ and assigned to the $(L+1)$-th PS, namely, $\mathcal{N}'_{L+1}=\{n_\mathrm{sele}\}$, $\mathcal{N}'_{l_\mathrm{sele}}=\mathcal{N}_{l_\mathrm{sele}}\!\!\setminus\!\!\{n_\mathrm{sele}\}$, and $\mathcal{N}'_l=\mathcal{N}_l$, $\forall\,l\ne l_\mathrm{sele}$.
\end{enumerate}
Under the above construction, the resulting array gain is
\begin{align}
	&\hspace{3pt}|\mathbf{a}(\theta)^H\mathbf{\bar{C}}'\bm{\bar{\phi}}'|=\bigg|\sum\nolimits_{l=1}^{L+1}\bm{\bar{\phi}}'[l]\sum\nolimits_{n\in\mathcal{N}'_l}e^{j(n-1)\pi\sin(\theta)}\bigg|\nonumber\\
	&\hspace{-10pt}=\bigg|\sum\nolimits_{l=1,l\neq l_\mathrm{sele}}^{L}\bm{\bar{\phi}}_\mathrm{opt}[l]\sum\nolimits_{n\in\mathcal{N}_l}e^{j(n-1)\pi\sin(\theta)}+\bigg.\nonumber\\
	&\hspace{2pt}\Bigg.\bm{\bar{\phi}}_\mathrm{opt}[l_\mathrm{sele}]\hspace{-18pt}\sum_{n\in\mathcal{N}_{l_\mathrm{sele}}\hspace{-2pt}\setminus\{n_\mathrm{sele}\}}\hspace{-20pt}e^{j(n-1)\pi\sin(\theta)}\!+\!\bm{\bar{\phi}}_\mathrm{opt}[l_\mathrm{sele}]e^{j(n_\mathrm{sele}-1)\pi\sin(\theta)}\Bigg|\nonumber\\
	&\hspace{-10pt}=\left|\sum_{l=1}^{L}\bm{\bar{\phi}}_\mathrm{opt}[l]\!\sum_{n\in\mathcal{N}_l}\!e^{j(n-1)\pi\sin(\theta)}\right|=\big|\mathbf{a}(\theta)^H\mathbf{\bar{C}}_\mathrm{opt}\bm{\bar{\phi}}_\mathrm{opt}\big|.
\end{align}
Since $\big\{\mathbf{\bar{C}}',\bm{\bar{\phi}}'\big\}$ is a feasible solution for the $(L+1)$-PS case, while $G^\star(L+1)$ is its optimal value, we have
\begin{equation}
	G^\star(L+1)\ge|\mathbf{a}(\theta)^H\mathbf{\bar{C}}'\bm{\bar{\phi}}'|=G^\star(L),
\end{equation}
which completes the proof.

\section{Derivation of $\nabla_{\mathbf{\bar{C}}}f_k(\mathbf{\bar{C}})$}\label{appendix_C}
First, we express the steering vector in terms of $\mathbf{a}(\theta_k)=\mathbf{a}_k^\mathrm{R}+j\mathbf{a}_k^\mathrm{I}$, $k\in\mathcal{K}$, where $\mathbf{a}_k^\mathrm{R}\triangleq\Re\{\mathbf{a}(\theta_k)\}$ and $\mathbf{a}_k^\mathrm{I}\triangleq\Im\{\mathbf{a}(\theta_k)\}$. Then, $f_k\big(\mathbf{\bar{C}}\big)$ can be rewritten as
\begin{equation}
	f_k\big(\mathbf{\bar{C}}\big)=\sum\nolimits_{l\in\mathcal{L}}\!\sqrt{\big(\big(\mathbf{a}_k^\mathrm{R}\big)^T\mathbf{\bar{C}}[:,l]\big)^2\!\!+\!\big(\big(\mathbf{a}_k^\mathrm{I}\big)^T\mathbf{\bar{C}}[:,l]\big)^2}.
\end{equation}
For each entry $\mathbf{\bar{C}}[n,l]$ viewed as a continuous variable, with $n\in\mathcal{N}$ and $l\in\mathcal{L}$, the $(n,l)$-th entry of gradient matrix $\nabla_{\mathbf{\bar{C}}}f_k\big(\mathbf{\bar{C}}\big)$ can be obtained by calculating
\vspace{0pt}
\begin{align}
	\nabla_{\mathbf{\bar{C}}}f_k\big(\mathbf{\bar{C}}\big)[n,l]&=\frac{\partial f_k\big(\mathbf{\bar{C}}\big)}{\partial\mathbf{\bar{C}}[n,l]}\nonumber\\
	&\hspace{-20pt}=\frac{\partial}{\partial\mathbf{\bar{C}}[n,l]}\sqrt{\big(\big(\mathbf{a}_k^\mathrm{R}\big)^T\mathbf{\bar{C}}[:,l]\big)^2+\big(\big(\mathbf{a}_k^\mathrm{I}\big)^T\mathbf{\bar{C}}[:,l]\big)^2}\nonumber\\
	&\hspace{-20pt}=\frac{\big(\mathbf{a}_k^\mathrm{R}\big)^T\mathbf{\bar{C}}[:,l]\mathbf{a}_k^\mathrm{R}[n]+\big(\mathbf{a}_k^\mathrm{I}\big)^T\mathbf{\bar{C}}[:,l]\mathbf{a}_k^\mathrm{I}[n]}{\sqrt{\big(\big(\mathbf{a}_k^\mathrm{R}\big)^T\mathbf{\bar{C}}[:,l]\big)^2+\big(\big(\mathbf{a}_k^\mathrm{I}\big)^T\mathbf{\bar{C}}[:,l]\big)^2}}.
\end{align}

\section{Derivation of Closed-Form Expression of $\beta_k$}\label{appendix_D}
We first establish a tractable upper bound on $\lambda_{\max}(\mathbf{Q}_k)$. For any $\mathbf{b}_{k,i}\in\mathbb C^{LN\times1}$ and $\mathbf x\in\mathbb R^{LN\times1}$, since the matrix $\Re\big\{\mathbf b_{k,i}^{*}\mathbf b_{k,i}^{T}\big\}$ is real symmetric positive semidefinite, applying the Cauchy-Schwarz inequality yields
\begin{equation}
	\mathbf x^T\Re\big\{\mathbf{b}_{k,i}^*\mathbf{b}_{k,i}^T\big\}\mathbf x=\big|\mathbf b_{k,i}^{T}\mathbf x\big|^2\le\|\mathbf b_{k,i}\|_2^2\|\mathbf x\|_2^2.
\end{equation}
By the Rayleigh quotient, this inequality directly implies that
\begin{equation}
	\lambda_{\max}\big(\Re\big\{\mathbf{b}_{k,i}^*\mathbf{b}_{k,i}^T\big\}\big)\le\|\mathbf b_{k,i}\|_2^2.
\end{equation}
Since the desired-signal term $-\Re\big\{\mathbf{b}_{k,k}^*\mathbf{b}_{k,k}^T\big\}$ is real symmetric negative semidefinite, its inclusion does not increase the maximum eigenvalue of the sum. Consequently, from the definition of $\mathbf{Q}_k$, we obtain
\begin{align}
	\lambda_{\max}(\mathbf Q_k)
	&=\lambda_{\max}\!\left(
	\gamma_k\!\!\sum\nolimits_{i\in\mathcal{K},i\ne k}\!\!\!\Re\big\{\mathbf{b}_{k,i}^*\mathbf{b}_{k,i}^T\big\}
	\!-\!\Re\big\{\mathbf{b}_{k,k}^*\mathbf{b}_{k,k}^T\big\}
	\!\right) \nonumber\\
	&\le\gamma_k
	\sum\nolimits_{i\in\mathcal{K},i\ne k}
	\lambda_{\max}\big(\Re\big\{\mathbf{b}_{k,i}^*\mathbf{b}_{k,i}^T\big\}\big) \nonumber\\
	&\le\gamma_k
	\sum\nolimits_{i\in\mathcal{K},i\ne k}
	\|\mathbf b_{k,i}\|_2^2,~k\in\mathcal{K}.\label{eq_beta_k}
\end{align}

Next, by recalling the definition $\mathbf{b}_{k,i}=\operatorname{vec}\Big(\bm\phi\big(\mathbf{\hat{H}}_k\mathbf{d}_i\big)^T\Big)$, $\|\mathbf{b}_{k,i}\|_2^2$ can be computed as
\begin{equation}\label{eq_beta_k_new}
	\|\mathbf b_{k,i}\|_2^2=\big\|\bm\phi\big(\mathbf{\hat{H}}_k\mathbf{d}_i\big)^T\big\|_\mathrm{F}^2=\|\bm\phi\|_2^2\big\|\mathbf{\hat{H}}_k\mathbf d_i\big\|_2^2\overset{\mathrm{(f)}}{=}L\big\|\mathbf{\hat{H}}_k\mathbf d_i\big\|_2^2,
\end{equation}
where (f) follows from $|\phi_l|=1,~l\in\mathcal L$. Substituting \eqref{eq_beta_k_new} into
\eqref{eq_beta_k} gives $\beta_k=\gamma_kL\sum_{i\in\mathcal{K},i\ne k}\big\|\mathbf{\hat{H}}_k\mathbf{d}_i\big\|_2^2,~k\in\mathcal{K}$. Therefore, we have
\begin{equation}
	\beta_k\mathbf{I}^{LN}-\mathbf{Q}_k\succeq\mathbf{0},~k\in\mathcal{K},
\end{equation}
which proves the validity of \eqref{eq_beta_k_closed_form}.

\section{Derivation of Linear MM Surrogates}\label{appendix_E}
We first derive the individual linear MM surrogate in
\eqref{eq_residual_surrogate_new}. Since \textbf{$\beta_k\mathbf I^{LN}-\mathbf Q_k\succeq\mathbf 0$}, for any feasible binary vector $\mathbf{\hat{c}}$ and current point $\mathbf{\hat{c}}^{[t]}$, it holds that
\vspace{0pt}
\begin{equation}\label{eq_MM_psd_start_proof}
	\big(\mathbf{\hat{c}}-\mathbf{\hat{c}}^{[t]}\big)^T\big(\beta_k\mathbf I^{LN}-\mathbf Q_k\big)\big(\mathbf{\hat{c}}-\mathbf{\hat{c}}^{[t]}\big)\ge0.
\end{equation}
Expanding \eqref{eq_MM_psd_start_proof} yields
\begin{align}\label{eq_MM_quadratic_bound_proof}
	\mathbf{\hat{c}}^T\mathbf Q_k\mathbf{\hat{c}}\le&\;\beta_k\mathbf{\hat{c}}^T\mathbf{\hat{c}}-2\beta_k\mathbf{\hat{c}}^T\mathbf{\hat{c}}^{[t]}+2\mathbf{\hat{c}}^T\mathbf Q_k\mathbf{\hat{c}}^{[t]}\nonumber\\
	&+\big(\mathbf{\hat{c}}^{[t]}\big)^T\big(\beta_k\mathbf I^{LN}-\mathbf Q_k\big)\mathbf{\hat{c}}^{[t]}.
\end{align}
Since $\mathbf{\hat{c}}$ is binary-valued, we have	$\mathbf{\hat{c}}^T\mathbf{\hat{c}}=(\mathbf 1^{LN})^T\mathbf{\hat{c}}$. Substituting this identity into
\eqref{eq_MM_quadratic_bound_proof} and introducing the constant term $\gamma_kM\sigma_k^2$ from \eqref{eq_Qk_single_def_new} gives
\begin{align}\label{eq_MM_residual_bound_proof}
	r_k(\mathbf{\hat{c}})\le&\,\big[\,2\mathbf Q_k\mathbf{\hat{c}}^{[t]}+\beta_k\big(\mathbf 1^{LN}-2\mathbf{\hat{c}}^{[t]}\big)\big]^T\mathbf{\hat{c}}\nonumber\\
	&+\big(\mathbf{\hat{c}}^{[t]}\big)^T\big(\beta_k\mathbf I^{LN}-\mathbf Q_k\big)\mathbf{\hat{c}}^{[t]}+\gamma_kM\sigma_k^2\nonumber\\
	=&\;\bm\xi_k^T\mathbf{\hat{c}}+\kappa_k=
	\hat r_k\big(\mathbf{\hat{c}}\mid\mathbf{\hat{c}}^{[t]}\big).
\end{align}
When $\mathbf{\hat{c}}=\mathbf{\hat{c}}^{[t]}$, the left-hand side of \eqref{eq_MM_psd_start_proof} is zero. Hence,
\begin{equation}
	\hat r_k\big(\mathbf{\hat{c}}^{[t]}\!\mid\!\mathbf{\hat{c}}^{[t]}\big)=r_k\big(\mathbf{\hat{c}}^{[t]}\big).
\end{equation}

Next, we derive the dual-weighted aggregate linear MM surrogate. Given that $\eta_k\ge0$, $\forall\,k\in\mathcal{K}$, summing \eqref{eq_MM_residual_bound_proof} over $k\in\mathcal{K}$ with weights $\{\eta_k\}_{k\in\mathcal{K}}$ yields
\begin{align}
	J_C(\mathbf{\hat{c}})&=\sum_{k\in\mathcal K}\eta_kr_k(\mathbf{\hat{c}})\nonumber\\
	&\le\sum_{k\in\mathcal K}\!\eta_k\hat r_k\big(\mathbf{\hat{c}}\!\mid\!\mathbf{\hat{c}}^{[t]}\big)\!=\bm\xi_C^T\mathbf{\hat{c}}+\kappa_C=\hat J_C\big(\mathbf{\hat{c}}\!\mid\!\mathbf{\hat{c}}^{[t]}\big).
\end{align}
Furthermore, with the definitions of $\mathbf Q_C$ and $\beta_C$, we obtain
\begin{equation}
	\beta_C\mathbf I^{LN}-\mathbf Q_C=\sum\nolimits_{k\in\mathcal K}\eta_k\big(\beta_k\mathbf I^{LN}-\mathbf Q_k\big)\succeq\mathbf 0.
\end{equation}
Since each individual surrogate is tight at $\mathbf{\hat{c}}^{[t]}$, their nonnegative weighted sum is also tight:
\begin{equation}
	\hat{J}_C\big(\mathbf{\hat{c}}^{[t]}\!\mid\!\mathbf{\hat{c}}^{[t]}\big)=J_C\big(\mathbf{\hat{c}}^{[t]}\big).
\end{equation}
This completes the derivation of the individual and aggregate linear MM surrogates.

\bibliographystyle{ieeetr}
\bibliography{bibfile}

\begin{thebibliography}{10}

\bibitem{Elbir_BF_suv}
A.~M. Elbir {\em et~al.}, ``Twenty-five years of advances in beamforming: From
  convex and nonconvex optimization to learning techniques,'' {\em IEEE Signal
  Process. Mag.}, vol.~40, pp.~118--131, Jun. 2023.

\bibitem{WanSJ_HB_GNN}
S.~Wan, Z.~Wang, and Y.~Zhou, ``Scalable hybrid beamforming for multi-user
  {MISO} systems: A graph neural network approach,'' {\em IEEE Trans. Wireless
  Commun.}, vol.~23, pp.~13694--13706, Oct. 2024.

\bibitem{WangHH_low_altitude_Mag}
H.~Wang {\em et~al.}, ``Reconfigurable airspace: Synergizing movable antenna
  and intelligent surface for low-altitude {ISAC} networks,'' {\em {\rm 2025},
  arXiv: 2511.10310}.

\bibitem{Molisch_HB_suv}
A.~F. Molisch, V.~V. Ratnam, S.~Han, Z.~Li, S.~L.~H. Nguyen, L.~Li, and
  K.~Haneda, ``Hybrid beamforming for massive {MIMO}: A survey,'' {\em IEEE
  Commun. Mag.}, vol.~55, pp.~134--141, Sep. 2017.

\bibitem{Albreem_BF_suv}
M.~A. Albreem {\em et~al.}, ``Overview of precoding techniques for massive
  {MIMO},'' {\em IEEE Access}, vol.~9, pp.~60764--60801, 2021.

\bibitem{WangHH_delay_MA}
H.~Wang {\em et~al.}, ``Throughput maximization for movable antenna systems
  with movement delay consideration,'' {\em IEEE Trans. Wireless Commun.},
  vol.~25, pp.~883--899, 2026.

\bibitem{Sohrabi_HB_suv}
F.~Sohrabi and W.~Yu, ``Hybrid digital and analog beamforming design for
  large-scale antenna arrays,'' {\em IEEE J. Sel. Top. Signal Process.},
  vol.~10, pp.~501--513, Apr. 2016.

\bibitem{YuXH_HB_switch_1}
X.~Yu, J.-C. Shen, J.~Zhang, and K.~B. Letaief, ``Alternating minimization
  algorithms for hybrid precoding in millimeter wave {MIMO} systems,'' {\em
  IEEE J. Sel. Top. Signal Process.}, vol.~10, pp.~485--500, Apr. 2016.

\bibitem{GaoXY_HB_EE}
X.~Gao {\em et~al.}, ``Energy-efficient hybrid analog and digital precoding for
  mmwave {MIMO} systems with large antenna arrays,'' {\em IEEE J. Sel. Areas
  Commun.}, vol.~34, pp.~998--1009, Apr. 2016.

\bibitem{Ahmed_HB_suv}
I.~Ahmed {\em et~al.}, ``A survey on hybrid beamforming techniques in {5G}:
  Architecture and system model perspectives,'' {\em IEEE Commun. Surv. Tuts.},
  vol.~20, no.~4, pp.~3060--3097, 2018.

\bibitem{YuXH_HB_switch_2}
X.~Yu, J.~Zhang, and K.~B. Letaief, ``A hardware-efficient analog network
  structure for hybrid precoding in millimeter wave systems,'' {\em IEEE J.
  Sel. Top. Signal Process.}, vol.~12, pp.~282--297, May 2018.

\bibitem{ZhangZL_HB_sub}
Z.~Zhang, X.~Wu, and D.~Liu, ``Joint precoding and combining design for hybrid
  beamforming systems with subconnected structure,'' {\em IEEE Sys. J.},
  vol.~14, pp.~184--195, Mar. 2020.

\bibitem{Bogale_HB_RF_PS}
T.~E. Bogale {\em et~al.}, ``On the number of {RF} chains and phase shifters,
  and scheduling design with hybrid analog–digital beamforming,'' {\em IEEE
  Trans. Wireless Commun.}, vol.~15, pp.~3311--3326, May 2016.

\bibitem{Payami_HB_PS}
S.~Payami, M.~Ghoraishi, and M.~Dianati, ``Hybrid beamforming for large antenna
  arrays with phase shifter selection,'' {\em IEEE Trans. Wireless Commun.},
  vol.~15, pp.~7258--7271, Nov. 2016.

\bibitem{Payami_HB_PS_switch}
S.~Payami {\em et~al.}, ``Phase shifters versus switches: An energy efficiency
  perspective on hybrid beamforming,'' {\em IEEE Wireless Commun. Lett.},
  vol.~8, pp.~13--16, Feb. 2019.

\bibitem{ZhuXD_HB_adaptive}
X.~Zhu {\em et~al.}, ``Adaptive hybrid precoding for multiuser massive
  {MIMO},'' {\em IEEE Commun. Lett.}, vol.~20, pp.~776--779, Apr. 2016.

\bibitem{LiHY_HB_low_resolution_PS_dynamic_array}
H.~Li, M.~Li, and Q.~Liu, ``Hybrid beamforming with dynamic subarrays and
  low-resolution {PSs} for mmwave {MU-MISO} systems,'' {\em IEEE Trans.
  Commun.}, vol.~68, pp.~602--614, Jan. 2020.

\bibitem{GaoH_HB_low_reso_DAC}
H.~Gao, D.~Liu, and Z.~Zhang, ``Low-resolution {DACs} aided partially-connected
  hybrid precoding design for {MU-MISO} system,'' {\em IEEE Commun. Lett.},
  vol.~27, pp.~2767--2771, Oct. 2023.

\bibitem{Alkhateeb_HB_fixed_PS_switch}
A.~Alkhateeb {\em et~al.}, ``Massive {MIMO} combining with switches,'' {\em
  IEEE Wireless Commun. Lett.}, vol.~5, pp.~232--235, Jun. 2016.

\bibitem{Bogale_HB_fixed_PS_switch}
T.~E. Bogale, L.~B. Le, and X.~Wang, ``Hybrid analog-digital channel estimation
  and beamforming: Training-throughput tradeoff,'' {\em IEEE Trans. Commun.},
  vol.~63, pp.~5235--5249, Dec. 2015.

\bibitem{Ardah_HB_PS_switch}
K.~Ardah {\em et~al.}, ``A unifying design of hybrid beamforming architectures
  employing phase shifters or switches,'' {\em IEEE Trans. Veh. Technol.},
  vol.~67, pp.~11243--11247, Nov. 2018.

\bibitem{Méndez_HB_PS_switch}
R.~Méndez-Rial {\em et~al.}, ``Hybrid {MIMO} architectures for millimeter wave
  communications: Phase shifters or switches?,'' {\em IEEE Access}, vol.~4,
  pp.~247--267, 2016.

\bibitem{JinJN_HB_dynamic_array}
J.~Jin, C.~Xiao, W.~Chen, and Y.~Wu, ``Channel-statistics-based hybrid
  precoding for millimeter-wave {MIMO} systems with dynamic subarrays,'' {\em
  IEEE Trans. Commun.}, vol.~67, pp.~3991--4003, Jun. 2019.

\bibitem{Park_HB_dynamic_array}
S.~Park, A.~Alkhateeb, and R.~W. Heath, ``Dynamic subarrays for hybrid
  precoding in wideband mmwave {MIMO} systems,'' {\em IEEE Trans. Wireless
  Commun.}, vol.~16, pp.~2907--2920, May 2017.

\bibitem{Ratnam_HB_sele}
V.~V. Ratnam {\em et~al.}, ``Hybrid beamforming with selection for multiuser
  massive {MIMO} systems,'' {\em IEEE Trans. Signal Process.}, vol.~66,
  pp.~4105--4120, Aug. 2018.

\bibitem{HuYB_HB_sub}
Y.~Hu, H.~Qian, K.~Kang, X.~Luo, and H.~Zhu, ``Joint precoding design for
  sub-connected hybrid beamforming system,'' {\em IEEE Trans. Wireless
  Commun.}, vol.~23, pp.~1199--1212, Feb. 2024.

\bibitem{Rupakula_HB_PS_num}
B.~Rupakula {\em et~al.}, ``Limited scan-angle phased arrays using randomly
  grouped subarrays and reduced number of phase shifters,'' {\em IEEE Trans.
  Antennas and Propag.}, vol.~68, pp.~70--80, Jan. 2020.

\bibitem{Valle_HB_PS_num}
J.~L. Valle {\em et~al.}, ``Reduction of phase shifters in planar phased arrays
  using novel random subarray techniques,'' {\em Appl. Sci.}, vol.~14, Jul.
  2024.

\bibitem{Balanis_antenna}
C.~A. Balanis, {\em Antenna theory: analysis and design}.
\newblock John wiley \& sons, 2016.

\bibitem{Aldubaikhy_fixed_access}
K.~Aldubaikhy {\em et~al.}, ``{mmWave} {IEEE} 802.11ay for {5G} fixed wireless
  access,'' {\em IEEE Wireless Commun.}, vol.~27, pp.~88--95, Apr. 2020.

\bibitem{LiJ_train}
J.~Li {\em et~al.}, ``Mobility support for millimeter wave communications:
  Opportunities and challenges,'' {\em IEEE Commun. Surv. Tuts.}, vol.~24,
  no.~3, pp.~1816--1842, 2022.

\bibitem{Jabbar_industry}
A.~Jabbar {\em et~al.}, ``Millimeter-wave smart antenna solutions for {URLLC}
  in industry 4.0 and beyond,'' {\em Sensors}, vol.~22, Mar. 2022.

\bibitem{TanJR_V2X}
J.~Tan {\em et~al.}, ``Beam alignment in {mmWave} {V2X} communications: A
  survey,'' {\em IEEE Commun. Surv. Tuts.}, vol.~26, no.~3, pp.~1676--1709,
  2024.

\bibitem{Giordani_beam_3GPP}
M.~Giordani, M.~Polese, A.~Roy, D.~Castor, and M.~Zorzi, ``A tutorial on beam
  management for {3GPP} {NR} at mmwave frequencies,'' {\em IEEE Commun. Surv.
  Tuts.}, vol.~21, no.~1, pp.~173--196, 2019.

\bibitem{Alkhateeb_codebook}
A.~Alkhateeb, G.~Leus, and R.~W. Heath, ``Limited feedback hybrid precoding for
  multi-user millimeter wave systems,'' {\em IEEE Trans. Wireless Commun.},
  vol.~14, pp.~6481--6494, Nov. 2015.

\bibitem{Qurratulain_Beam}
M.~Qurratulain~Khan, A.~Gaber, P.~Schulz, and G.~Fettweis, ``Machine learning
  for millimeter wave and terahertz beam management: A survey and open
  challenges,'' {\em IEEE Access}, vol.~11, pp.~11880--11902, 2023.

\bibitem{XueQ_Beam}
Q.~Xue, C.~Ji, S.~Ma, J.~Guo, Y.~Xu, Q.~Chen, and W.~Zhang, ``A survey of beam
  management for {mmWave} and {THz} communications towards {6G},'' {\em IEEE
  Commun. Surv. Tuts.}, vol.~26, no.~3, pp.~1520--1559, 2024.

\bibitem{Alrabeiah_Beam}
M.~Alrabeiah and A.~Alkhateeb, ``Deep learning for mmwave beam and blockage
  prediction using sub-6 {GHz} channels,'' {\em IEEE Trans. Commun.}, vol.~68,
  pp.~5504--5518, Sep. 2020.

\bibitem{Grant_CVX_software}
M.~Grant and S.~Boyd, ``{CVX}: {MATLAB} software for disciplined convex
  programming, version 2.2,'' Jan. 2020.

\bibitem{WangHH_interference_MA}
H.~Wang, Q.~Wu, and W.~Chen, ``Movable antenna enabled interference network:
  Joint antenna position and beamforming design,'' {\em IEEE Wireless Commun.
  Lett.}, vol.~13, pp.~2517--2521, Sep. 2024.

\bibitem{Schubert_up_downlink_duality}
M.~Schubert and H.~Boche, ``Solution of the multiuser downlink beamforming
  problem with individual {SINR} constraints,'' {\em IEEE Trans. Veh.
  Technol.}, vol.~53, pp.~18--28, Jan. 2004.

\bibitem{SunY_surrogate_function}
Y.~Sun, P.~Babu, and D.~P. Palomar, ``Majorization-minimization algorithms in
  signal processing, communications, and machine learning,'' {\em IEEE Trans.
  Signal Process.}, vol.~65, pp.~794--816, Feb. 2017.

\bibitem{Horn_matrix_2012}
R.~A. Horn and C.~R. Johnson, {\em Matrix Analysis}.
\newblock Cambridge, U.K.: Cambridge University Press, 2nd~ed., 2012.

\bibitem{Polik_interior_point_book}
I.~P{\'o}lik and T.~Terlaky, {\em Interior Point Methods for Nonlinear
  Optimization}.
\newblock Berlin, Germany: Springer, 2010.

\bibitem{WuYF_RIS_GBD}
Y.~Wu {\em et~al.}, ``Globally optimal resource allocation design for discrete
  phase shift {IRS}-assisted multiuser networks with perfect and imperfect
  {CSI},'' {\em IEEE Trans. Wireless Commun.}, vol.~24, pp.~1306--1324, 2025.

\bibitem{ZhangYP_mmChannel}
Y.~Zhang and C.~You, ``{SWIPT} in mixed near- and far-field channels: Joint
  beam scheduling and power allocation,'' {\em IEEE J. Sel. Areas Commun.},
  vol.~42, pp.~1583--1597, Jun. 2024.

\bibitem{Rappaport_mmChannel}
T.~S. Rappaport {\em et~al.}, ``Overview of millimeter wave communications for
  fifth-generation ({5G}) wireless networks—with a focus on propagation
  models,'' {\em IEEE Trans. Antennas and Propag.}, vol.~65, pp.~6213--6230,
  Dec. 2017.

\end{thebibliography}
\end{document}